
\documentclass[10pt,conference,letterpaper]{IEEEtran}

\usepackage{times,amsmath,epsfig}
\usepackage{booktabs} 
\usepackage{amsthm}
\usepackage[british]{babel} 
\usepackage{subcaption}
\usepackage{algorithm}  
\usepackage{algpseudocode}
\usepackage[noadjust]{cite}
\usepackage{tikz}
\usetikzlibrary{positioning,shadows,arrows,trees,shapes,fit}
\usepackage{url}
\usepackage{amsfonts}
\usepackage{scrextend}
\setlength{\textfloatsep }{5pt}

\newtheorem{theorem}{Theorem}[section]
\newtheorem{corollary}[theorem]{Corollary}
\newtheorem{definition}[theorem]{Definition}
\newtheorem{example}[theorem]{Example}
\newtheorem{lemma}[theorem]{Lemma}
\newtheorem{remark}[theorem]{Remark}

\title{Secure Range Queries for Multiple Users}


\author{\IEEEauthorblockN{Anselme Tueno}
\IEEEauthorblockA{SAP Security Research \\
Karlsruhe, Germany \\
anselme.kemgne.tueno@sap.com}
\and
\IEEEauthorblockN{Florian Kerschbaum}
\IEEEauthorblockA{University of Waterloo, Canada\\
 Waterloo, Canada\\
florian.kerschbaum@uwaterloo.ca}
}

\begin{document}

\maketitle

\begin{abstract}
Order-preserving encryption allows encrypting data, while still enabling efficient range queries on the encrypted data. Moreover, it does not require any change to the database management system, because comparison operates on ciphertexts as on plaintexts. This makes order-preserving encryption schemes very suitable for data outsourcing in cloud computing scenarios. However, all order-preserving encryption schemes are necessarily symmetric limiting the use case to one client and one server.
Imagine a scenario where a Data Owner encrypts its data before outsourcing it to the Cloud Service Provider and a Data Analyst wants to execute private range queries on this data. This scenario occurs in many cases of collaborative machine learning where data source and processor are different entities. Then either the Data Owner must reveal its encryption key or the Data Analyst must reveal the private queries. 
In this paper, we overcome this limitation by allowing the equivalent of a public-key order-preserving encryption.
We present a secure multiparty protocol that enables secure range queries for multiple users. In this scheme, the Data Analyst cooperates with the Data Owner and the Cloud Service Provider in order to order-preserving encrypt the private range queries without revealing any other information to the parties.
The basic idea of our scheme is to replace encryption with a secure, interactive protocol.
In this protocol, we combine order-preserving encryption based on binary search trees with homomorphic encryption and garbled circuits achieving security against passive adversaries with sublinear communication and computation complexity.
We apply our construction to different order-preserving encryption schemes including frequency-hi\-ding order-preserving encryption which can withstand many of the popularized attacks on order-preserving encryption.
We implemented our scheme and observed that if the database size of the Data Owner has 1 million entries it takes only about 0.3 s on average via a loopback interface (1.3 s via a LAN) to encrypt an input of the Data Analyst.
\end{abstract}


\maketitle

\section{Introduction}
\label{Sec_Introduction}
In cloud computing, companies use a network of remote servers hosted by a service provider on the Internet to store, manage, and process data, rather than a local server or a personal computer. Naively this would imply that Data Owners must give up either the security of the data or the functionality of processing the data. Therefore companies are reluctant to migrate their sensitive data to the cloud. However, different techniques, such as secure multiparty computation \cite{Yao.1982, LindellP.2002, Goldreich.2004, PinkaSSW.2009, LindellP.XX.2009, CramerDN.2015}, homomorphic encryption \cite{Gentry.2009, Paillier.1999, DamgardJurik.2001} or order-preserving encryption \cite{AgrawalKSX.2004, BoldyrevaCLO.2009, BoldyrevaCO.2011, PopaRZB.2011, PopaLZ.2013, KerschbaumS.2014, Kerschbaum.2015}, exist that enable cloud users to encrypt their data before outsourcing it to the cloud while still be able to process and search on the outsourced and encrypted data without decrypting it. Order-preserving encryption (OPE) allows encrypting data, while still enabling efficient range queries on the encrypted data. Moreover it does not require any change to the database management system, because comparison operates on ciphertexts. This makes order-preserving encryption schemes very suitable for data outsourcing in cloud computing scenarios, since it can be retrofitted to existing applications.

However, all OPE schemes are necessarily symmetric limiting the use case to one client and one server. This is due to the fact that a public-key encryption would allow a binary search on the ciphertext. Imagine a scenario where a Data Owner (DO) encrypts its data before outsourcing it to the Cloud Service Provider (CSP) and a Data Analyst (DA) wants to execute private range queries on this data. Then either the Data Owner must reveal its encryption key, since order-preserving encryption is symmetric, or the Data Analyst must reveal the private queries. 

This distinction between DO and DA occurs in many cases of collaborative data analysis, data mining and machine learning. In such scenarios, multiple parties need to jointly conduct data analysis tasks based on their private inputs. As concrete examples from the literature consider, e.g., supply chain management, collaborative forecasting, benchmarking, criminal investigation, smart metering, etc.) \cite{DuA.2001, AtallahDES.2003, AtallahBLFT.2004, Kerschbaum.2012}. Although in these scenarios plaintext information sharing would be a viable alternative, participants are reluctant to share their information with others. This reluctance is quite rational and commonly observed in practice. It is due to the fact that the implications of the information are unknown or hard to assess. For example, sharing the information could weaken their negotiation position, impact customers' market information by revealing corporate performance and strategies or impact reputation \cite{AtallahDES.2003, AtallahBLFT.2004, CatrinaK.2008}.

In this paper, we overcome the limitation of private range querying on order-preserving encrypted data by allowing the equivalent of a public-key encryption. Our idea is to replace public-key encryption with a secure, interactive protocol. Non-interactive binary search on the ciphertext is no longer feasible, since every encryption requires the participation of the Data Owner who can rate limit (i.e., control the rate of query sent by) the Data Analyst. 

Since neither the DA wants to reveal his query value nor the DO his encryption state (key), this is clearly an instance of a secure computation where two or more parties compute on their secret inputs without revealing anything but the result.  In an ideal world the DA and DO would perform a two-party secure computation for the encryption of the query value and then the DA would send the encrypted value as part of an SQL query to the CSP.  However, this two-party secure computation is necessarily linear in the encryption state (key) and hence the size of the database.  Our key insight of this paper is that we can construct an encryption with logarithmic complexity in the size of the database by involving the CSP in a three-party secure computation without sacrificing any security, since the CSP will learn the encrypted query value in any case.  One may conjecture that in this construction the encryption key of the DO may be outsourced to secure hardware in the CSP simplifying the protocol to two parties, but that would prevent the DO from rate limiting the encryption and the binary search attack would be a threat again, even if the protocol were otherwise secure.

We call our protocol \textit{oblivious order-preserving encryption}.
We implemented it and an encryption by the DA takes 0.3 seconds using the loopback interface and 1.3 seconds using a LAN in a large data center.


Our contributions are as follows:
\begin{itemize}
	\item First, we introduce a novel notion of oblivious order-preser\-ving encryption. This scheme allows a DA to execute private range queries on an order-preserving 
	encrypted database.
	\item Then, we propose an oblivious order-preserving encryption protocol based on mutable order-preserving encryption sche\-mes by Popa et al. \cite{PopaLZ.2013} 
	and Kerschbaum and Schr{\"{o}}pfer \cite{KerschbaumS.2014}.
	\item Since the schemes \cite{PopaLZ.2013, KerschbaumS.2014} are deterministic, we also consider the case where the underlying OPE scheme is the frequency-hiding 
	OPE of \cite{Kerschbaum.2015}, which is probabilistic.
	\item Finally, we implement and evaluate our scheme.
\end{itemize}
The remainder of the paper is structured as follows. We review related work in Section \ref{Sec_Related_Work} and preliminaries in Section \ref{Sec_Preliminaries} before defining correctness and security of oblivious OPE in Section \ref{Sec_Correctness_Def}. Section \ref{Sec_OOPE_Protocol} describes our scheme for the case where the underlying OPE scheme is deterministic and prove its correctness and security. In Section \ref{Sec_Int_Comparison} we discuss integer comparison and equality test with garbled circuit and how we use it in our schemes. The non-deterministic case is handled in Section \ref{Sec_OOPE_Protocol_FHOPE}. We discuss implementations details and evaluation in Section \ref{Sec_Implementation} before concluding our work in Section \ref{Sec_Conclusion}.

\section{Related Work}
\label{Sec_Related_Work}

Order-preserving encryption can be classified into stateless schemes (see Section~\ref{subsec_less}) and stateful schemes (see Section~\ref{MutableOPEDef}). Our work is concerned with stateful schemes and hence we introduce some of their algorithms in this section. However, we review stateless schemes and their security definitions first in order to distinguish them from stateful schemes.

\subsection{Stateless Order-preserving Encryption}
\label{subsec_less}
Order-preserving encryption ensures that the order relation of the ciphertexts is the same as the order of the corresponding plaintexts. This allows to efficiently search on the ciphertexts using binary search or perform range queries without decrypting the ciphertexts. The concept of order-preserving encryption was introduced in the database community by Agrawal et al.~\cite{AgrawalKSX.2004}. The cryptographic study of Agrawal et al.'s scheme was first initiated by \cite{BoldyrevaCLO.2009}, which proposed an ideal security definition IND-OCPA\footnote{IND-OCPA means indistinguishability under ordered chosen plaintext attacks and requires that OPE schemes must reveal no additional information about the plaintext values besides their order} for OPE. The authors proved that under certain implicit assumptions IND-OCPA is infeasible to achieve. Their proposed scheme was first implemented in the CryptDB tool of Popa et al.~\cite{PopaRZB.2011} and attacked by Naveed et al.~\cite{NaveedKW.2015}. In \cite{BoldyrevaCO.2011} Boldyreva et al.~further improved the security and introduced modular order-preserving encryption (MOPE). MOPE adds a secret modular offset to each plaintext value before it is encrypted. It improves the security of OPE, as it does not leak any information about plaintext location, but still does not provide ideal IND-OCPA security. Moreover Mavroforakis et al.~showed that executing range queries via MOPE in a naive way allows the adversary to learn the secret offset and so negating any potential security gains. They address this vulnerability by introducing query execution algorithms for MOPE \cite{MavroforakisCOK.2015}. However, this algorithm assumes a uniform distribution of data and has already been attacked in \cite{DurDuB16}. In a different strand of work Teranishi et al.~improve the security of stateless order-preserving encryption by randomly introducing larger gaps in the ciphertexts \cite{TerYun14}. However, they necessarily also fail at providing ideal security.

\subsection{Stateful Order-preserving Encryption} 
\label{MutableOPEDef}
Popa et al.~were the first to observe that one can avoid the impossibility result of \cite{BoldyrevaCLO.2009} by giving up certain restrictions of OPE. As result of their observations they introduced mutable OPE \cite{PopaLZ.2013}. Their first observation was that most OPE applications only require a less restrictive interface than that of encryption schemes. Their encryption scheme is therefore implemented as an interactive protocol running between a client that also owns the data to be encrypted and an honest-but-curious server that stores the data. Moreover, it is acceptable that a small number of ciphertexts of already-encrypted values change over time as new plaintexts are encrypted. With this relaxed definition their scheme was the first OPE scheme to achieve ideal security.

\textbf{Popa et al.'s scheme (mOPE$_1$) \cite{PopaLZ.2013}.} The basic idea of Popa et al.'s scheme is to have the encoded values organized at the server in a binary search tree (\textit{OPE-tree}). Specifically the server stores the state of the encryption scheme in a table (\textit{OPE-table}). The state contains ciphertexts consisting of a deterministic AES ciphertext and the order (\textit{OPE Encoding}) of the corresponding plaintext. To encrypt a new value $x$ the server reconstructs the OPE-tree from the OPE-table and traverses it. In each step of the traversal the client receives the current node $v$ of the search tree, decrypts and compares it with $x$. If $x$ is smaller (resp.~larger) then the client recursively proceeds with the left (resp.~right) child node of $v$. An edge to the left (resp.~to the right) is encoded as 0 (resp.~1). The OPE encoding of $x$ is then the path from the root of the tree to $x$ padded with $10 \ldots 0$ to the same length $l$. To ensure that the length of OPE encoding do not exceed the defined length $l$, the server must occasionally perform balancing operations. This  updates some order in the OPE table (i.e.~the OPE encoding of some already encrypted values mutate to another encoding). 

\textbf{Kerschbaum and Schr{\"{o}}pfer's scheme (mOPE$_2$) \cite{KerschbaumS.2014}.} The insertion cost of Popa et al's scheme is high, because the tree traversal must be interactive between the client and the server. To tackle this problem Kerschbaum and Schr{\"{o}}pfer proposed in \cite{KerschbaumS.2014} another ideal secure, but significantly more efficient, OPE scheme. Both schemes use binary search and are mutable, but the main difference is that in the scheme of \cite{KerschbaumS.2014} the state is not stored on the server but on the client. Moreover the client chooses a range $\{0, \ldots, M\}$ for the order. For each plaintext $x $ and the corresponding OPE encoding $y \in \{0, \ldots, M\}$ the client maintains a pair $\langle x, y\rangle$ in the state. To insert a new plaintext the client finds two pairs $\langle x_i, y_i\rangle$, $\langle x_{i+1}, y_{i+1}\rangle$ in the state such that $x_i \leq x < x_{i+1}$ and computes the OPE encoding  as follows:
\begin{itemize}
	\item if $x_i = x$ then the OPE encoding of $x$ is $y = y_i$
	\item else
		\begin{itemize}
			\item if $y_{i+1} - y_{i} = 1$ then 
				\begin{itemize}
					\item update the state (Algorithm 2 in \cite{KerschbaumS.2014})\footnote{This potentially updates all OPE encoding $y$ produced so far 
					\cite{KerschbaumS.2014}.}. 
				\end{itemize}
			\item the OPE encoding of $x$ is $y = y_i + \lceil \frac{y_{i+1} - y_{i}}{2} \rceil$.
		\end{itemize}
\end{itemize}
The encryption algorithm is keyless and the only secret information is the state which grows with the number of encryptions of distinct plaintexts. The client uses a dictionary to keep the state small and hence does not need to store a copy of the data.

\textbf{Kerschbaum's scheme (mOPE$_3$) \cite{Kerschbaum.2015}.}  Deterministic OPE sche\-mes \cite{AgrawalKSX.2004, BoldyrevaCLO.2009, PopaLZ.2013, KerschbaumS.2014, TerYun14} are vulnerable to many attacks like: frequency analysis, sorting attack, cumulative attack \cite{NaveedKW.2015, GruSek16}. To increase the security of OPE Kerschbaum first introduced in \cite{Kerschbaum.2015} a new security definition called \textit{indistinguishability under frequency-analyzing ordered chosen plaintext attack} (IND-FAOCPA) that is strictly stronger than IND-OCPA. Second he proposed a novel OPE scheme mOPE$_3$ that is secure under this new security definition. The basic idea of this scheme is to randomize ciphertexts such that no frequency information from repeated ciphertexts leaks. It borrows the ideas of \cite{KerschbaumS.2014} with a modification that re-encrypts the same plaintext with a different ciphertext. First client and server state are as by mOPE$_2$. The order ranges from $0$ to $M$ as by mOPE$_2$. The algorithm traverses the OPE-tree by going to the left or to the right depending on the comparison between the new plaintext and nodes of the tree. However, if the value being encrypted is equal to some value in the tree then the algorithm traverses the tree depending on the outcome of a random coin. Finally, if there is no more node to traverse the algorithm rebalances the tree if necessary and then computes the ciphertext similarly to $y = y_i + \lceil \frac{y_{i+1} - y_{i}}{2} \rceil$. 
 
In subsequent independent analysis~\cite{GruSek16} this encryption scheme has been shown to be significantly more secure to the attacks against order-preserving encryption (albeit not perfectly secure).

\section{Preliminaries}
\label{Sec_Preliminaries}

\subsection{Problem Statement}
\begin{figure}[h]
\centering
\includegraphics[width=0.45\textwidth]{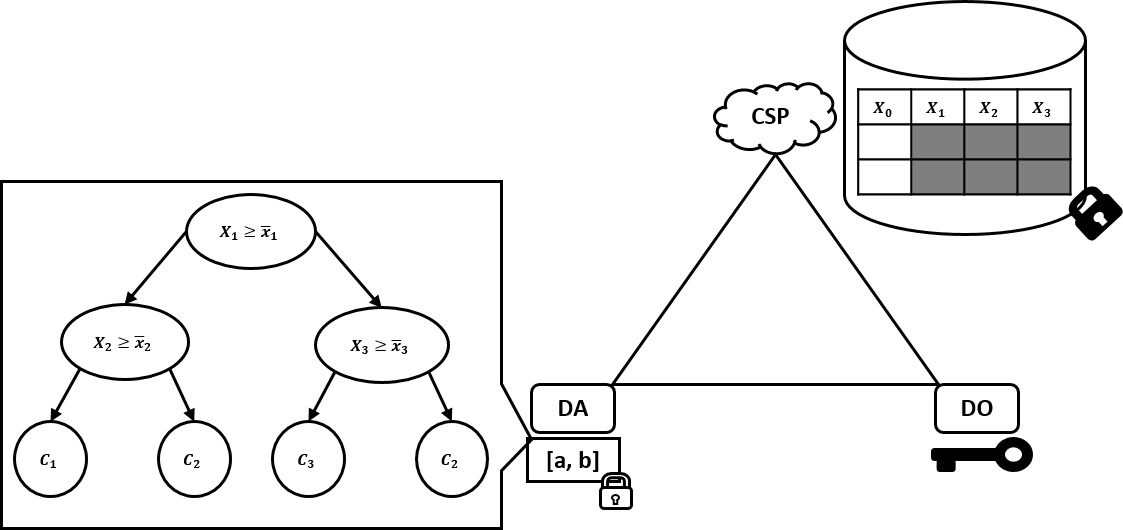}
\caption{Illustration of the Problem: \textmd{DO sends encrypted data to CSP and retains encryption keys. DA holds a private decision tree that can be represented as set of range queries. DA wants to perform data analysis on DO's encrypted data without revealing any information on the queries. DO wants to maintain privacy of the data stored at CSP.}} 
\label{Problem_Def}
\end{figure}
Our work is motivated by the following scenario. Assume a Data Owner encrypts its data with an order-preserving encryption, stores the encrypted data in a cloud database held by a Cloud Service Provider, and retains the encryption key. A Data Analyst wants to perform some analysis on the encrypted data. To this end it holds, e.g., a private machine learning model involving comparisons. In a supply chain scenario, the Data Analyst could be a supplier (manufacturer) wanting to optimize its manufacturing process based on data owned by its buyer (another supplier or distributor).

For instance, we assume the model to be a decision tree as pictured in Figure \ref{Problem_Def}, where the $\overline{x}_i$ are the thresholds and $(X_1, X_2, X_3)$ is the input vector (that maps to corresponding columns in the  Data Owner's database) to be classified. In order to use the model for classification the Data Analyst transforms the decision tree into range queries, e.g., for class $c_1$ we have the query $(X_1 < \overline{x}_1) \wedge (X_2 < \overline{x}_2)$. More precisely the Data Analyst wants to execute queries like in equations \ref{Select_Count} and \ref{Select_Column}, where we assume $X_0$ to be public.
\begin{align}
   \textsc{Select} ~ \textsc{Count}(*) ~ \textsc{Where} ~ X_1 < \overline{x}_1 ~ \textsc{And} ~ X_2 < \overline{x}_2 \label{Select_Count}\\
	\textsc{Select} ~ X_0 ~ \textsc{Where} ~ X_1 < \overline{x}_1 ~ \textsc{And} ~ X_2 < \overline{x}_2 \label{Select_Column}
\end{align}

However, as the database is encrypted (i.e. columns $X_1$, $X_2$ and $X_3$ are OPE encrypted) the Data Analyst needs ciphertexts of the thresholds $\overline{x}_i$. 
In \cite{TaigelTP.2018}, Taigel et. al. describes an approach that combines decision tree classification and OPE to enable privacy-preserving forecasting of maintenance demand based on distributed condition data. They consider the problem of a Maintenance, Repair, and Overhaul (MRO) provider from the aerospace industry that provides maintenance services to their customers' (e.g., commercial airlines or air forces) jet engines. The customers as Data Owners consider the real condition data of their airliners as very sensitive and therefore this data is stored encrypted in a cloud database using OPE. The MRO provider as Data Analyst holds a decision tree that can predict the probability of maintenance, repair, and overhaul of spare parts. However, the classification of an individual spare part is not necessary, but only aggregated numbers such as returned by Equation \ref{Select_Count}. The aggregated numbers then allow the MRO to compute the forecast without violating the privacy of the real condition data. In \cite{TaigelTP.2018}, they provide privacy only for the customer, while we want to allow privacy for both customer and MRO provider.

Order-preserving encryption is necessarily symmetric, thus only the Data Owner can encrypt and decrypt the data stored on the cloud server. If the Data Analyst needs to obtain a ciphertext for a query, it can just send the plaintext threshold to the Data Owner. However, if the model contains intellectual property which the Data Analyst wants to remain protected, then this free sharing of information is no longer possible. Our goal is to allow the data analysis to be performed efficiently without revealing any sensitive information in the query and without revealing the key to the order-preserving encryption.

We now review a few building blocks used in our construction of oblivious order-preserving encryption.

\subsection{Secure Multiparty Computation}
Secure multiparty computation (SMC) is a cryptographic technique that allows several parties to compute a function on their private inputs without revealing any information other than the function's output. A classical example in the literature is the so called \textit{Yao's millionaire's problem} introduced in \cite{Yao.1982}. Two millionaires are interested in knowing which of them is richer without revealing their actual wealth. Formally we have a set of $n$ parties $P_1, \ldots , P_n$, each with its own private input $x_1, \ldots, x_n$ and they want to compute the function $y=f(x_1, \ldots, x_n)$\footnote{The output of each party can also be private, in this case $y = (y_1, \ldots, y_n)$.} without disclosing their private inputs.

Security of SMC protocols is often defined by comparison to an \textit{ideal model}. In that model parties privately send their input to a trusted third party (TTP). Then the TTP computes the outcome of the function on their behalf, sends the corresponding result to each party and forgets about the private inputs. In the \textit{real model} parties emulate the ideal model by executing a cryptographic protocol to perform the computation.
At the end only the result should be revealed and nothing else. A SMC protocol is then said to be secure if the adversary can learn only the result of the computation and data that can be deduced from this result and known inputs \cite{Goldreich.2004, CramerDN.2015, Frikken.2010}.

An important issue to consider when defining the security of SMC is the adversary's power. There exists many security models, but the \textit{semi-honest} and the \textit{malicious} adversary model are the most popular \cite{Goldreich.2004, CramerDN.2015}. In the semi-honest (a.k.a \textit{honest-but-curious}) model parties behave \textit{passively} and follow the protocol specification. However, the adversary can obtain the internal state of corrupted parties and uses this to learn more information. 
In contrast, a malicious adversary is \textit{active} and instructs corrupted parties to deviate from the protocol specification. 

\subsection{Yao's Garbled Circuit}

Yao's initial protocol for secure two-party computation uses a technique called Garbled Circuits (GC). A GC can be used to execute a function over symmetrically encrypted inputs. In this section, we recall the idea of GC protocol and refer to \cite{Yao.1982, LindellP.2009, PinkaSSW.2009, LindellP.XX.2009, EjgenbergFLL.2012, BellareHR.2012} for more technical description of circuit garbling and its implementation. Let $f$ be a function over two inputs $x$ and $y$, then a garbling scheme consists of a five-tuple of algorithms $\mathcal{G} = (Gb, En, De, Ev, ev)$. The original function $f$ is encoded as circuit that the function $ev(f,\cdot,\cdot):\{0,1\}^n \times \{0,1\}^n \rightarrow \{0,1\}^m$ can evaluate. On input $f$ and security parameter $k\in\mathbb{N}$, algorithm $Gb$ returns a triple of strings $(F,e,d) \leftarrow Gb(1^k ,f)$. The string $F$ describes a garbled function, $Ev(F, \cdot, \cdot)$, that maps each pair of garbled inputs $(X, Y)$ to a garbled output $Z = Ev(F,X, Y)$. The string $e$ describes an encoding function, $En(e, \cdot)$, that maps initial inputs $x, y \in \{0,1\}^n$ to garbled inputs $X = En(e,x)$,  $Y = En(e,y)$. The String $d$ describes a decoding function, $De(d, \cdot)$, that maps a garbled output $Z$ to a final output $z = De(d,Z)$. The garbling scheme is correct if $De(d, Ev(F, En(e, x), En(e, y))) = ev(f, x, y)$ \cite{BellareHR.2012}.

A GC protocol is a 2-party protocol consisting of a generator (Gen) and an evaluator (Eva) with input $x$ and $y$ respectively. On input $f$ and $k$, Gen runs $(F,e,d) \leftarrow Gb(1^k ,f)$ and parses $e$ as $(X_1^0, X_1^1, \cdots, X_n^0, X_n^1, Y_1^0, Y_1^1, \cdots, Y_n^0, Y_n^1)$. Then she sends $F, d$ and $X = (X_1^{x_1}, \cdots, X_n^{x_n}) \leftarrow En(e, X)$ to Eva, where $x_i$ represents the $i$-th bit of $x$. Now the parties execute an oblivious transfer protocol with Eva having selection string $y$ and Gen having inputs ($Y_1^0, Y_1^1, \cdots, Y_n^0, Y_n^1)$. As a result, Eva obtains $Y = (Y_1^{y_1}, \cdots, Y_n^{y_n})$ and Gen learns nothing. Finally, Eva evaluates and outputs $z = De(d,Ev(F,X, Y))$.

\subsection{Homomorphic Encryption}
A homomorphic encryption scheme is an encryption scheme that allows computations on ciphertexts by generating an encrypted result whose decryption matches the result of operations on the corresponding plaintexts. With fully homomorphic encryption schemes \cite{Gentry.2009} one can compute any efficiently computable function. However, with the current state of the art, their computational overhead is still too high for practical applications. Efficient alternatives are additive homomorphic encryption schemes, e.g.: Paillier \cite{Paillier.1999, DamgardJurik.2001}. They allow specific arithmetic operations on plaintexts, by applying an efficient operation on the ciphertexts. Let $E(x)$ denote the probabilistic encryption of a plaintext $x$. Then the following addition property holds $E(x)E(y) = E(x + y)$, i.e., by multiplying two ciphertexts one obtains a ciphertext of the sum. In our protocol we will use the public-key encryption scheme of Paillier \cite{Paillier.1999}.

\subsection{Overview of Our Construction}
In theory generic SMC allows to compute any efficiently computable function. However, any generic SMC is at least linear in the input size, which in this case is the number of encrypted values in the database. The idea of our solution is to exploit the inherent, i.e. implied by input and output, leakage of Popa et.~al.'s OPE scheme making our oblivious OPE sublinear in the database size.
Furthermore, we exploit the advantage of (homomorphic) encryption allowing a unique, persistent OPE state stored at the CSP while being able to generate secure inputs for the SMC protocol and the advantage of garbled circuits allowing efficient, yet provably secure comparison.
Our oblivious order-preserving encryption is therefore a mixed-technique, secure multi-party computation protocol between the Data Owner, the Data Analyst and the Cloud Service Provider in the semi-honest model.

In detail, our protocol proceeds as follows:
The Data Owner outsources its OPE state to the cloud-service provider.
As already described, the state consists of an OPE-table of ciphertext, order pairs.
However, in oblivious order-preserving encryption the ciphertext is created using an additively homomorphic public-key encryption scheme instead of standard symmetric encryption.
When the DA traverses the state in order to encrypt a query plaintext, the CSP creates secret shares\footnote{Secret shares are random values that add up to the plaintext.} using the homomorphic property.
One secret share is sent to the DA and one to the DO.
The DA and DO then engage in a secure two-party computation using Yao's Garbled Circuits in order to compare the reconstruction of the secret shares (done in the garbled circuit) to the query plaintext of the DA.
The result of this comparison is again secret shared between DA and DO, i.e., neither will know whether the query plaintext is above or below the current node in the traversal.
Both parties -- DA and DO -- send their secret shares of the comparison result to the CSP which then can determine the next node in the traversal.
These steps continue until the query plaintext has been sorted into the OPE-table and the CSP has an order-preserving encoding that can be sent to the DA.
A significant complication arises from this order-preserving encoding, since it must not reveal the result of the comparison protocols to the DA (although it may be correlated to the results).
In the next two sections we provide the detailed, step-by-step formalization of the construction.

\section{Correctness and Security Definitions}
\label{Sec_Correctness_Def}
In this section we first present the system architecture and then define correctness and security of oblivious order-preserving encryption. 
\subsection{System Architecture}
Our oblivious OPE (OOPE) protocol $\Pi_{OOPE}$ extends the two-party protocols by Popa et al.'s  \cite{PopaLZ.2013} and by Kerschbaum and Schr{\"{o}}pfer \cite{KerschbaumS.2014} to a three-party protocol.
 
The first party, a.k.a Data Owner (DO), encrypts its data with an order-preserving encryption as described in Section \ref{MutableOPEDef} and stores the encrypted data in a cloud database hosted by the second party, Cloud Service Provider (CSP). The third party, Data Analyst (DA), needs to execute analytic range queries, e.g. how many values are in a given range, on the Data Owner's encrypted data. However, the DO's data is encrypted with a symmetric key OPE and the DA's queries contain sensitive information. Therefore, the DA interacts with the DO and the CSP to order-preserving encrypt the sensitive queries values without learning anything else or revealing any information on the sensitive queries values.

\subsection{Definitions}
Let $\mathbb{D} = \{x_1, \ldots, x_n\}$ be the finite data set of the DO, and $h = \log_2 n$. Let $[\![x]\!]$ denote the ciphertext of $x$ under Paillier's scheme with public key $pk$ and corresponding private key $sk$ that only the DO knows. Let $\preceq$ be the order relation on $[\![\mathbb{D}]\!] = \{[\![x_1]\!], \ldots, [\![x_n]\!]\} $ defined as: $[\![x_1]\!] \preceq [\![x_2]\!]$ if and only if $x_1 \leq x_2$. The relations $\succeq, \prec, \succ$ are defined the same way with $\geq, <, >$ respectively. Let $\mathbb{P} = \{0, \ldots, 2^l-1\}$ (e.g. $l = 32$) and $\mathbb{O} = \{0, \ldots, M\}$ ($M$ positive integer) be plaintext and order\footnote{We will use \textit{order} and \textit{OPE encoding} interchangeably.} range resp., i.e.: $\mathbb{D} \subseteq \mathbb{P}$.

We begin by defining order-preserving encryption as used in this paper.

\begin{definition}\label{DefOPE}\normalfont
Let $\lambda$ be the security parameter of the public-key scheme of Paillier. An \textit{order-preserving encryption (OPE)} consists of the three following algorithms:
\begin{itemize}
	\item $(pk, sk) \leftarrow    \textsc{KeyGen}(\lambda)$: Generates a public key $pk$ and a private key $sk$ according to $\lambda$,
	\item $S', \left\langle [\![x]\!], y\right\rangle  \leftarrow  \textsc{Encrypt}\textsubscript{\tiny OPE}(S, x, pk)$: For a plain $x \in \mathbb{P}$ computes ciphertext $\left\langle [\![x]\!], y\right\rangle$ and updates the state $S$ to $S'$, where $[\![x]\!] \leftarrow  \textsc{Paillier}.\textsc{Encrypt}(x, pk)$ is a Paillier ciphertext, $y \leftarrow $ mOPE$_2.\textsc{Encrypt}(S, x)$ (resp.  $y \leftarrow $mOPE$_3.\textsc{Encrypt}(S, x)$) is the \textit{order} of $x$ in the deterministic (resp. non-deterministic) case with $y \in \mathbb{O}$.
	\item $x \leftarrow \textsc{Decrypt}\textsubscript{\tiny OPE}(\left\langle [\![x]\!], y\right\rangle, sk)$: Computes the plaintext $x \leftarrow \textsc{Paillier}.\textsc{Decrypt}([\![x]\!], sk)$ 
	of the ciphertext $\left\langle [\![x]\!], y\right\rangle$.
\end{itemize} 
The encryption scheme is \textit{correct} if: $$\textsc{Decrypt}\textsubscript{\tiny OPE}(\textsc{Encrypt}\textsubscript{\tiny OPE}(S, x, pk), sk) = x$$ for any valid state $S$ and $x$. It is \textit{order-preserving} if the order is preserved, i.e. $y_i < y_j \Rightarrow x_i \leq x_j$ for any $i$ and $j$.
\end{definition}
For a data set $\mathbb{D}$ the encryption scheme generates an ordered set of ciphertexts. We formalize it with the following definition.  

\begin{definition}\label{DefOPETable}\normalfont
Let $j_1, j_2, \ldots$ be the ordering of $\mathbb{D}$ (i.e $x_{j_1} \leq x_{j_2} \leq \ldots$) then the OPE scheme generates an \textit{OPE-Table} which is an ordered set $\mathbb{T} = \left\langle [\![ x_{j_1}]\!], y_{j_1} \right\rangle, \left\langle [\![ x_{j_2}]\!], y_{j_2} \right\rangle, \ldots$, where $y_{j_k} \in \mathbb{O}$ is the order of $x_{j_k}$.
\end{definition}
The OPE-table is sent to the server and used to generate the following search tree during the oblivious order-preserving encryption protocol.

\begin{definition}\label{DefOPETree}\normalfont
An \textit{OPE-tree} is a tree $\mathcal{T} = (r, \mathcal{L}, \mathcal{R})$, where $r=[\![x]\!]$ for some $x$, $\mathcal{L}$ and $\mathcal{R}$ are OPE-trees such that:
If $r'$ is a node in the left subtree $\mathcal{L}$ then $r \succeq r'$ and if $r''$ is a node in the right subtree $\mathcal{R}$, then $r \preceq r''$.
\end{definition}

\begin{definition}\label{DefClt_SvrState}\normalfont
For a data set $\mathbb{D}, \textsc{Encrypt}\textsubscript{\tiny OPE}$ generates the \textit{Data Owner state}, the set of all $\left\langle x_i, y_i \right\rangle$ such that $x_i \in \mathbb{D}$ and $y_i$ is the order of $x_i$. The \textit{server state} is the pair $\mathbb{S} = \left\langle \mathcal{T}, \mathbb{T}\right\rangle$ consisting of the OPE-tree $\mathcal{T}$ and the OPE-table $\mathbb{T}$.
\end{definition}

\begin{remark}\label{Remark_OPE_Def}\normalfont
The reason of using mOPE$_2.\textsc{Encrypt}$ in Definition \ref{DefOPE} instead of mOPE$_1.\textsc{Encrypt}$ is that the DA will receive the order part of the ciphertext after the OOPE protocol. However, by mOPE$_1.\textsc{Encrypt}$ the binary representation of this order always reveals the corresponding path in the tree, allowing the DA to infer more information from the protocol than required. In contrast, mOPE$_2$ allows the DO to choose not just the length of the OPE encoding, but also the order range like $0, \ldots, M$. If $\log_2 M$ is larger than the needed length of the OPE encoding and $M$ is not a power of two, then for a ciphertext $\left\langle [\![x]\!], y\right\rangle$ $y$ does not reveal the position of $[\![x]\!]$ in the tree. In Figure \ref{lblOPETree} for instance, when applying mOPE$_1$ with $h=3$, the order of 25 (i.e. 011 = 3) reveals the corresponding path in the tree. However, with mOPE$_2$ and $M = 28$, 25 has order 11 = 1011.
\end{remark}

\begin{example}\label{ExampleState}\normalfont
Assume $\mathbb{D} = \{10, 20, 25, 32, 69\}$ is the data set, $M = 28$ and the insertion order is $32, 20, 25, 69, 10$. Then the ciphertexts after executing algorithm \textsc{Encrypt}\textsubscript{\tiny OPE} are $\left\langle [\![32]\!], 14\right\rangle$, $\left\langle [\![20]\!], 7\right\rangle$, $\left\langle [\![25]\!], 11\right\rangle$, $\left\langle [\![69]\!], 21\right\rangle$, $\left\langle [\![10]\!], 4\right\rangle$. The OPE-tree, the OPE-table and the DO state are depicted in Figure \ref{fig:lblOOPEInit}. 
\end{example}
\tikzset{
  treenode/.style = {align=center, inner sep=0pt, text centered,
    font=\sffamily},
  arn_x/.style = {treenode, rectangle, draw=black,
    minimum width=5mm, minimum height=5mm}
}
\newcommand{\OPETreeExample}{
\begin{tikzpicture}[->,level/.style={sibling distance = 3cm,
  level distance = 1cm}] 
\centering
\node [arn_x] {$ [\![32]\!]$}
    child{ node [arn_x] {$ [\![20]\!]$}
            child{ node [arn_x] {$[\![10]\!]$ } 
						edge from parent 
						node[above left = -1.5mm] {$0$}
            }
            child{ node [arn_x] {$[\![25]\!]$}
						edge from parent 
						node[above right = -1.5mm] {$1$}
            }   
					  edge from parent 
			      node[above left = -1.5mm] {$0$}
    }
    child{ node [arn_x] {$[\![69]\!]$}
		edge from parent 
		node[above right = -1.5mm] {$1$}
		}
; 
\end{tikzpicture}
}
\newcommand{\OPETableExample}{
	\centering
  \begin{tabular}{ c | c }
    \hline
    $[\![x]\!]$  & $y$ \\ \hline \hline
    $[\![10]\!]$ & $4$ \\ \hline
    $[\![20]\!]$ & $7$ \\ \hline
		$[\![25]\!]$ & $11$ \\ \hline
		$[\![32]\!]$ & $14$ \\ \hline
		$[\![69]\!]$ & $21$ \\ 
    \hline
  \end{tabular}
}
\newcommand{\OPEClienState}{
	\centering
  \begin{tabular}{ c }
    \hline
    $\left\langle x, y \right\rangle$ \\ \hline \hline
    $\left\langle 32, 14 \right\rangle$ \\ \hline
    $\left\langle 20, 7 \right\rangle$ \\ \hline
		$\left\langle 25, 11 \right\rangle$ \\ \hline
		$\left\langle 69, 21\right\rangle$ \\ \hline
		$\left\langle 10, 4 \right\rangle$ \\ 
    \hline
  \end{tabular}
}
\begin{figure}[t]
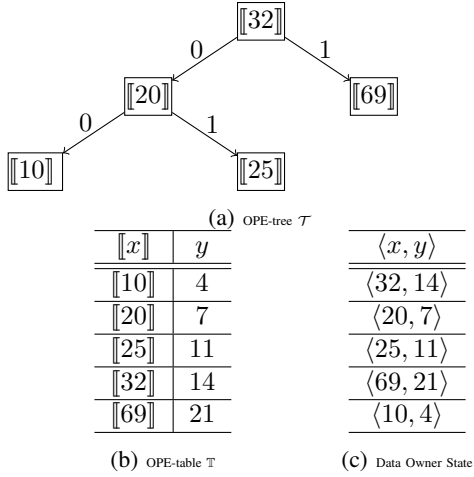

    \centering
		\begin{subfigure}[b]{0.37\textwidth}
				\OPETreeExample
				\caption{\tiny{OPE-tree $\mathcal{T}$}} 
				\label{lblOPETree}
		\end{subfigure}
		~
		\centering
		\begin{subfigure}[b]{0.40\linewidth}
				\OPETableExample
				\caption{\tiny{OPE-table $\mathbb{T}$}} 
				\label{lblOPETable}
		\end{subfigure}
		~
		\centering
		\begin{subfigure}[b]{0.25\linewidth}
				\OPEClienState
				\caption{\tiny{Data Owner State\tiny}} 
				\label{lblOPEClientState}
		\end{subfigure}
		\caption{Example initialization} 
		\label{fig:lblOOPEInit}
\end{figure}
\begin{definition}[Inputs/Outputs]\label{DefInOut}\normalfont \
The \textit{Inputs} and \textit{Outputs} of the oblivious OPE functionality are defined as follow: 
\begin{itemize}
\item Inputs
		\begin{itemize}
			\item CSP: server's state $\mathbb{S} = \left\langle \mathcal{T}, \mathbb{T} \right\rangle$
			\item DA : $\overline{x} \in \mathbb{P}$, the input to be encrypted
			\item DO : $sk$, DO's private key under Paillier 
		\end{itemize}
\item Outputs
		\begin{itemize}
			\item CSP: $\left\langle [\![\overline{x}]\!], \overline{y} \right\rangle$, the ciphertext to update the state
			\item DA : $\overline{y} \in \mathbb{O}$, the order of $\overline{x}$
			\item DO : $\emptyset$, no output.
		\end{itemize}
\end{itemize}
\end{definition}
\begin{definition}[Correctness]\label{DefCorrectness}\normalfont
Let $\mathbb{D}$ be the data set and the inputs of the protocol be defined as above. At the end of the protocol the Data Analyst obtains for its input $\overline{x}$ the output $\overline{y}$ such that $\overline{y}$ is the order of $\overline{x}$ in $\mathbb{D} \cup \{ \overline{x} \}$. The Cloud Provider obtains 
$\left\langle [\![\overline{x}]\!], \overline{y} \right\rangle$ that is added to the OPE-table. The Data Owner obtains nothing:
$$\textsc{OOPE}(\mathbb{S}, \overline{x}, sk) = (\left\langle [\![\overline{x}]\!], \overline{y} \right\rangle, \overline{y}, \emptyset)$$
$$\textsc{Decrypt}\textsubscript{\tiny OPE}(\left\langle [\![\overline{x}]\!], \overline{y} \right\rangle, sk) = \overline{x}$$
$$\left\langle [\![x_1]\!], y_1 \right\rangle, \left\langle [\![x_2]\!], y_2 \right\rangle \in \mathbb{T} ~ \wedge ~ y_1 < \overline{y} < y_2 \Rightarrow  [\![x_1]\!] \preceq [\![\overline{x}]\!] \preceq [\![x_2]\!] $$
\end{definition}
\begin{remark}\label{Rem_OOPE_Correctness}\normalfont
Updating the server state, i.e.: allowing the server to learn $[\![\overline{x}]\!]$, is only necessary if the DA wants to encrypt several values, as the encryption depends on the state.\\ 
\end{remark}

We say that two distributions $\mathcal{D}_1$ and $\mathcal{D}_2$ are computationally indistinguishable (denoted $\mathcal{D}_1 \stackrel{c}{\equiv} \mathcal{D}_1$) if no probabilistic polynomial time algorithm can distinguish them except with negligible probability. In SMC protocols the \textit{view} of a party consists of its input and the sequence of messages that it has received during the protocol execution \cite{Goldreich.2004}. The protocol is said secure if for each party, one can construct a simulator that given only the input and the output can generate a distribution that is computationally indistinguishable to the party's view.
\begin{definition}[Semi-honest Security]\label{DefSecurity}\normalfont
Let $\mathbb{D}$ be the data set with cardinality $n$ and the inputs and outputs be as previously defined. Then a protocol $\Pi$ \textit{securely} implements the functionality OOPE in the \textit{semi-honest model with honest majority} if the following conditions hold:
	\begin{itemize}
		\item there exists a probabilistic polynomial time algorithm $S_{DO}$ that simulates the DO's view $view_{DO}^{\Pi}$ of the protocol given $n$ and the private key $sk$ only,
		\item there exists a probabilistic polynomial time algorithm $S_{DA}$ that simulates the DA's view $view_{DA}^{\Pi}$ of the protocol given $n$, the input $\overline{x}$ and the output $\overline{y}$ only,
		\item there exists a probabilistic polynomial time algorithm $S_{CSP}$ that simulates the CSP's view $view_{CSP}^{\Pi}$ of the protocol given access to the server state $\mathbb{S}$ and the output $\left\langle [\![\overline{x}]\!], \overline{y} \right\rangle$ only.
	\end{itemize}
Formally:
	\begin{eqnarray}
		S_{DO}(n, sk, \emptyset) & \stackrel{c}{\equiv} & view_{DO}^{\Pi}(\mathbb{S}, \overline{x}, sk) \label{eq:OOPE_Security_Eqn_DO}\\
		S_{DA}(n, \overline{x}, \overline{y}) & \stackrel{c}{\equiv} & view_{DA}^{\Pi}(\mathbb{S}, \overline{x}, sk) \label{eq:OOPE_Security_Eqn_DA}\\
		S_{CSP}(\mathbb{S}, \left\langle [\![\overline{x}]\!], \overline{y} \right\rangle) & \stackrel{c}{\equiv} & view_{CSP}^{\Pi}(\mathbb{S}, \overline{x}, sk) \label{eq:OOPE_Security_Eqn_CSP} 
	\end{eqnarray}
\end{definition}
\section{Protocol for Oblivious OPE} 
\label{Sec_OOPE_Protocol}
In this section we present our scheme $\Pi$\textsubscript{\tiny OOPE} that consists of an initialization step and a computation step. The initialization step generates the server state and is run completely by the Data Owner. The server state and the ciphertexts are sent to the CSP afterward. 

\subsection{Initialization}
\label{Initialization}
Let $\mathbb{D} = \left\{x_1, \ldots, x_n \right\}$ be the unordered DO's dataset and $h = \log_2 n$. The DO chooses a range $0, \ldots, M$ such that $\log_2 M > h$ (Remark \ref{Remark_OPE_Def}), runs \textsc{Encrypt}\textsubscript{\tiny OPE} from Definition \ref{DefOPE} and sends the generated OPE-table to the CSP.

\subsection{Algorithms}
In our oblivious OPE protocol the CSP traverses the OPE-tree (Figure \ref{OOPE_overview}). In each step it chooses the next node depending on a previous oblivious comparison step that involve the DA and the DO. If the comparison returns equality or the CSP reaches a null node then it computes the ciphertext based on the position of the current node in the OPE-table. \\
In the following we present our main protocol that repeatedly makes calls to a sub-protocol (Protocol \ref{Comparison_Protocol}). Both protocols run between the three parties. During the protocol's execution the CSP runs Algorithm \ref{TreeTraversal_Algorithm} to traverse the tree and Algorithm \ref{Encryption_Algorithm} to compute the order. 

\textbf{Our OOPE Protocol.} As said before the protocol (Protocol \ref{OOPE_Protocol}) is executed between the three parties. First the CSP retrieves the root of the tree and sets it as current node. Then the protocol loops $h ~(= \log_2 n)$ times. In each step of the loop the CSP increments the counter and the parties run an oblivious comparison protocol (Protocol \ref{Comparison_Protocol}) whose result enables the CSP to traverse the tree (Algorithm \ref{TreeTraversal_Algorithm}). If the inputs are equal or the next node is empty then the traversal stops. However, the CSP uses the current node as input to
the next comparison until the counter reaches the value $h$. After the loop the result is either the order of the current node in case 
of equality or it is computed by the CSP using Algorithm \ref{Encryption_Algorithm}. In the last step, the DA computes $[\![\overline{x}]\!]$ using DO's public key $pk$ and sends it to the CSP as argued in Remark \ref{Rem_OOPE_Correctness}. Alternatively, the DA could generate an unique identifier (UID) for each element that is being inserted and send this UID instead. So if the corresponding node is later involved in a comparison step, the result is computed by the DA alone.

\begin{figure}[t]
\centering
\includegraphics[width=0.40\textwidth]{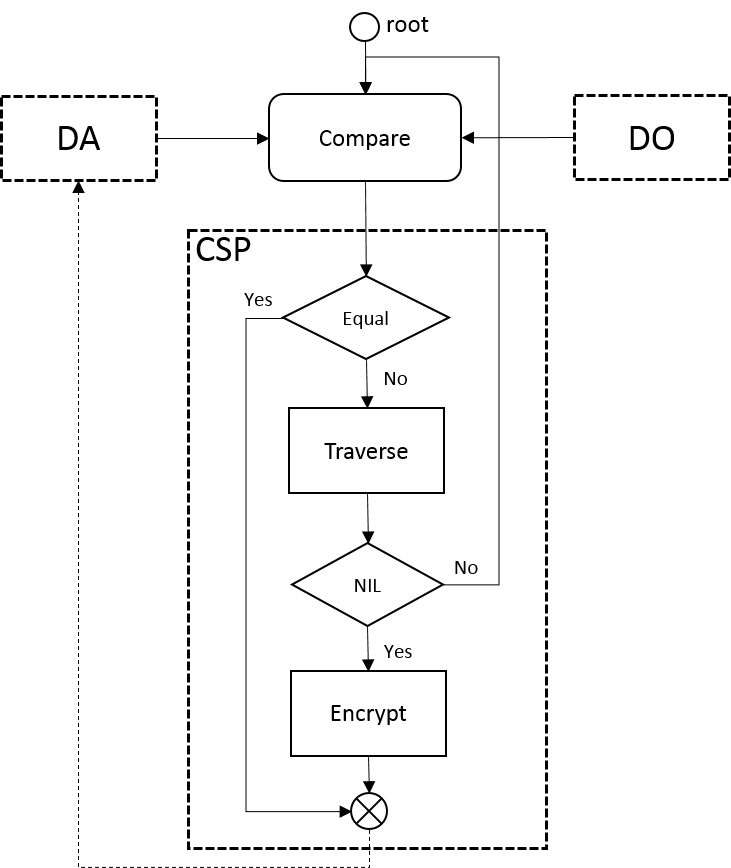}
\caption{Overview of the protocol} 
\label{OOPE_overview}
\end{figure}

\begin{algorithm}[ht]
\floatname{algorithm}{Protocol}
\caption{Oblivious OPE Protocol}
\begin{flushleft}
\textbf{Input ($In_{CSP}, In_{DA}, In_{DO}$)}: $(\mathbb{S}, \overline{x}, sk)$ \\
\textbf{Output ($Out_{CSP}, Out_{DA}, Out_{DO}$)}: $(\left\langle [\![\overline{x}]\!], \overline{y} \right\rangle, \overline{y}, \emptyset)$ \\
\textbf{Functionality }: \textsc{OOPE}($\mathbb{S}, \overline{x}, sk$)
\end{flushleft}

\hrulefill
\begin{algorithmic}[1]

\State CSP : \textbf{retrieve} root $[\![x_{root}]\!]$ of $\mathcal{T}$ 
\State CSP : \textbf{let} $[\![x]\!] \gets [\![x_{root}]\!]$
\State CSP : \textbf{let} $count \gets 0$
\Repeat
\State $(\left\langle b_e, b_g \right\rangle, \emptyset, \emptyset)\gets$ \textsc{Compare}($[\![x]\!], \overline{x}, sk$)

\State CSP : \textbf{if} $b_e \neq 0$ \textbf{then} \label{Compare_Returned_Inequality} \Comment{meaning $\overline{x} \neq x$}
\State CSP : ~~~~$[\![x_{next}]\!] \gets$ \textsc{Traverse}($b_g, [\![x]\!]$)
\State CSP : ~~~~\textbf{if} $[\![x_{next}]\!] \neq $ \textbf{NIL} \textbf{then} \label{Traverse_Returned_Not_Null_Node}
\State CSP : ~~~~~~~~\textbf{let} $[\![x]\!] \gets [\![x_{next}]\!]$
\State CSP : ~~~~\textbf{end if}
\State CSP : \textbf{end if}

\State CSP : \textbf{let} $count \gets count + 1$ 
\Until {$count = h$ }

\State CSP : \textbf{if} $b_e = 0$ \textbf{then} \label{Traverse_Stops_With_Equality} \Comment{meaning $\overline{x} = x$}
\State CSP : ~~~~\textbf{retrieve} $\left\langle [\![x]\!], y \right\rangle$ \textbf{and let} $\overline{y} \gets y$ \label{Encrypt_Trivvial}
\State CSP : \textbf{else}
\State CSP : ~~~~$\overline{y} \gets$ \textsc{Encrypt}($b_g, [\![x]\!])$
\State CSP : \textbf{end if}

\State CSP $\rightarrow$ DA: \textbf{send} $\overline{y}$ \label{OOPE_CSP_DA_Order} 
\State DA $\rightarrow$ CSP: \textbf{send} $[\![\overline{x}]\!]$ \label{OOPE_DA_CSP_Cipher}

\end{algorithmic}
\label{OOPE_Protocol}
\end{algorithm}

\textbf{Oblivious Comparison Protocol.} The oblivious comparison (Protocol \ref{Comparison_Protocol}) is a protocol between the three parties as well, with input $([\![x]\!], \overline{x}, sk)$ for the CSP, the DA and the DO respectively. 
First the CSP randomizes its input, with a random integer $r \in \{0, \ldots, 2^{l+k}\}$\footnote{Where $k$ is the security parameter that determines the statistical leakage, e.g. $k = 32$ \cite{DamgardT.2008}.}, to $[\![x+r]\!] \leftarrow [\![x]\!] \cdot [\![r]\!]$, by first computing $[\![r]\!]$ with DO's public key, such that the DO will not be able to identify the position in the tree, and it sends $[\![x+r]\!]$  to the DO and $r$ to the DA. Then the DO with input $(b_o, b'_o, x+r)$ and the DA with input $(b_a, b'_a, \overline{x}+r)$ engage in a garbled circuit protocol for comparison as described in Section \ref{Sec_Int_Comparison}. For simplicity, the garbled circuit is implemented in Protocol \ref{Comparison_Protocol} as ideal functionality. In reality the DO generates the garbled circuit and the DA evaluates it. The DA and the DO receive $(b_e\oplus b_a\oplus b_o, b_g\oplus b'_a \oplus b'_o)$ as output of this computation and resp. send $(b_a, b'_a, b_e\oplus b_o, b_g\oplus b'_o)$ and $(b_o, b'_o, b_e\oplus b_a, b_g\oplus b'_a)$ to the CSP. Finally the CSP evaluates Equation \ref{Eqn_GC_Result_Check} and outputs $\left\langle b_e, b_g \right\rangle$. This will be used to traverse the OPE-tree.
\begin{equation}
  \begin{cases}
		b_e &= b_e\oplus b_o \oplus b_o = b_e\oplus b_a \oplus b_a \\
		b_g &= b_g\oplus b'_o \oplus b'_o = b_g\oplus b'_a\oplus b'_a
	\end{cases}
	\label{Eqn_GC_Result_Check}
\end{equation}

\begin{algorithm}[ht]
\floatname{algorithm}{Protocol}
\caption{Oblivious Comparison Protocol}
\begin{flushleft}
\textbf{Input ($In_{CSP}, In_{DA}, In_{DO}$)}: $([\![x]\!], \overline{x}, sk)$ \\
\textbf{Output ($Out_{CSP}, Out_{DA}, Out_{DO}$)}: $(\left\langle b_e, b_g \right\rangle, \emptyset, \emptyset)$ \\
\textbf{Functionality }: \textsc{Compare}($[\![x]\!], \overline{x}, sk$)
\end{flushleft}

\hrulefill
\begin{algorithmic}[1]
\State CSP: choose an $(l+k)$-bits random $r$ and compute $[\![x+r]\!]$
\State CSP $\rightarrow$ DO: send $[\![x+r]\!]$
\State CSP $\rightarrow$ DA: send $r$
\State DO: decrypt $[\![x+r]\!]$ and choose masking bits $b_o, b'_o$ \label{Compare_CSP_To_DO}
\State DA: compute $\overline{x}+r$ and choose masking bits $b_a, b'_a$
\State DO $\rightarrow$ GC: send $(b_o, b'_o, x+r)$
\State DA $\rightarrow$ GC: send $(b_a, b'_a, \overline{x}+r)$ 
\State GC $\leftrightarrow$ DA: send $(b_e\oplus b_a\oplus b_o, b_g\oplus b'_a \oplus b'_o)$
\State GC $\leftrightarrow$ DO: send $(b_e\oplus b_a\oplus b_o, b_g\oplus b'_a \oplus b'_o)$
\State DA $\rightarrow$ CSP: send $(b_a, b'_a, b_e\oplus b_o, b_g\oplus b'_o)$ 
\State DO $\rightarrow$ CSP: send $(b_o, b'_o, b_e\oplus b_a, b_g\oplus b'_a)$
\State CSP: compute $b_e = b_e\oplus b_o \oplus b_o = b_e\oplus b_a \oplus b_a$
\State CSP: compute $b_g = b_g\oplus b'_o \oplus b'_o = b_g\oplus b'_a\oplus b'_a$ 
\State CSP: output $\left\langle b_e, b_g \right\rangle$
\end{algorithmic}
\label{Comparison_Protocol}
\end{algorithm}

\textbf{Tree Traversal Algorithm.} The tree traversal (Algorithm \ref{TreeTraversal_Algorithm}) runs only at the CSP. Depending on the output of the oblivious comparison the CSP either goes to the left (line \ref{Traverse_To_Left}) or to the the right (line \ref{Traverse_To_Right}). 
If the comparison step returns equality there is no need to traverse the current node and the protocol returns the corresponding ciphertext. 
\begin{algorithm}
\caption{Tree Traversal}
\begin{flushleft}
\textbf{Input:} $ b_g, [\![x]\!]$ \\
\textbf{Output:} $[\![x_{next}]\!]$
\end{flushleft}

\hrulefill
\begin{algorithmic}[1]
\Function {\textsc{Traverse}}{$ b_g, [\![x]\!]$}
	\If {$b_g = 0$} \label{Traverse_To_Left} \Comment{traverse to left}
		\State Let $[\![x_{next}]\!]$ be the left child node of $[\![x]\!]$ 
	\Else \label{Traverse_To_Right} \Comment{traverse to right}
		\State Let $[\![x_{next}]\!]$ be the right child node of $[\![x]\!]$ 
	\EndIf
	\State \textbf{return} $[\![x_{next}]\!]$
\EndFunction
\end{algorithmic}
\label{TreeTraversal_Algorithm}
\end{algorithm}

\textbf{Encryption Algorithm.} The encryption algorithm (Algorithm \ref{Encryption_Algorithm}) runs at the CSP as well and is called only if the tree traversal (Algorithm \ref{TreeTraversal_Algorithm}) has to stop. Then the compared values are strictly ordered and depending on that the algorithm finds the closest element to the current node in the OPE-table. This element is either the predecessor if DA's input is smaller (line \ref{Encrypt_With_Pred}) or the successor if DA's input is larger (line \ref{Encrypt_With_Succ}). Then if necessary (line \ref{Balance_Condition}) rebalance the tree and compute the ciphertext as in line \ref{Return_New_Ciphertext}.
\begin{algorithm}
\caption{Encryption for value $\overline{x}$}

\begin{algorithmic}[1]

\Function {\textsc{Encrypt}}{$b_g, [\![x]\!]$}
\State \textbf{retrieve} $\left\langle [\![x]\!], y \right\rangle$ in the OPE-table
	\If {$b_g = 0$} \label{Encrypt_With_Pred} 
		\State \textbf{retrieve} predecessor $\left\langle [\![x']\!], y' \right\rangle$ of $\left\langle [\![x]\!], y \right\rangle$ 
		\State \textbf{let} $y_l \gets y'$ \textbf{and} $y_r \gets y$ \Comment{$y'<y$}
	\Else \label{Encrypt_With_Succ} 
		\State \textbf{retrieve} successor $\left\langle [\![x'']\!], y'' \right\rangle$ of $\left\langle [\![x]\!], y \right\rangle$ 
		\State \textbf{let} $y_l \gets y$ \textbf{and} $y_r \gets y''$ \Comment{$y<y''$}
	\EndIf
	
	\If {$y_r - y_l = 1$} \label{Balance_Condition}
		\State \textbf{rebalance} the OPE-tree
	\EndIf
	\State $\overline{y} \gets y_l + \left\lceil \frac{y_{r} - y_{l}}{2} \right\rceil$ \label{Return_New_Ciphertext}
\State \textbf{return} $\overline{y}$
\EndFunction
\end{algorithmic}
\label{Encryption_Algorithm}
\end{algorithm}

\subsection{Correctness and Security Proofs}
The security of Yao's protocol is proven in \cite{LindellP.2009} and provides a simulator, that will be used to construct simulators for DO and DA.


\begin{theorem}[Correctness]
The protocol $\Pi$\textsubscript{\tiny \normalfont OOPE} is correct.
\end{theorem}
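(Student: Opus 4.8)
The plan is to establish correctness by tracing the execution of Protocol~\ref{OOPE_Protocol} and showing that it faithfully simulates the tree traversal performed by the underlying mutable OPE scheme (mOPE$_2$), and then invoking the correctness of that scheme together with Definition~\ref{DefOPE}. First I would verify that the oblivious comparison sub-protocol (Protocol~\ref{Comparison_Protocol}) correctly computes the comparison bits: the CSP's masking $[\![x+r]\!] \gets [\![x]\!]\cdot[\![r]\!]$ is a valid Paillier ciphertext of $x+r$ by the additive homomorphic property, so the DO recovers $x+r$ and the DA holds $\overline{x}+r$; since the same offset $r$ is added to both operands, the garbled circuit's comparison of $x+r$ with $\overline{x}+r$ yields exactly the comparison of $x$ with $\overline{x}$ (note $r < 2^{l+k}$ and both $x,\overline{x} \in \mathbb{P} = \{0,\dots,2^l-1\}$, so no wraparound ambiguity arises if the circuit works over a sufficiently wide bit-length). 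Then I would check the XOR-sharing bookkeeping: the GC outputs $b_e\oplus b_a\oplus b_o$ and $b_g\oplus b'_a\oplus b'_o$ to both DA and DO; after they forward their shares, the CSP reconstructs $b_e$ and $b_g$ via Equation~\ref{Eqn_GC_Result_Check}, so $b_e = 0$ iff $x = \overline{x}$ and $b_g$ encodes whether $\overline{x} > x$.

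Next I would argue that Algorithm~\ref{TreeTraversal_Algorithm} drives the CSP down the OPE-tree along exactly the path that mOPE$_2$'s binary search would follow: at each node, $b_g = 0$ (i.e.\ $\overline{x} < x$) sends the search to the left child and $b_g = 1$ to the right, matching Definition~\ref{DefOPETree}. The loop runs $h = \log_2 n$ times, which is an upper bound on the depth of the (balanced) OPE-tree, so the traversal is guaranteed to reach either a node equal to $\overline{x}$ (in which case $b_e = 0$ and the CSP returns the stored order $y$, line~\ref{Encrypt_Trivvial}) or a leaf/null child, at which point the search position is pinned between a consecutive pair $\langle[\![x']\!],y'\rangle, \langle[\![x'']\!],y''\rangle$ in the OPE-table. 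Here I would invoke Algorithm~\ref{Encryption_Algorithm}: when $b_g = 0$ the last compared node is the successor, so its predecessor $y'$ and itself $y$ bracket $\overline{x}$; when $b_g = 1$ the node itself and its successor $y''$ bracket it. Setting $\overline{y} \gets y_l + \lceil (y_r - y_l)/2 \rceil$ reproduces exactly the mOPE$_2$ encoding formula, with the rebalancing branch (line~\ref{Balance_Condition}) firing precisely when $y_r - y_l = 1$, as in \cite{KerschbaumS.2014}. Since the reconstructed $\overline{y}$ is the mOPE$_2$ order of $\overline{x}$ in $\mathbb{D}\cup\{\overline{x}\}$, the order-preserving property $y_1 < \overline{y} < y_2 \Rightarrow [\![x_1]\!] \preceq [\![\overline{x}]\!] \preceq [\![x_2]\!]$ follows from correctness of mOPE$_2$, and $\textsc{Decrypt}\textsubscript{\tiny OPE}(\langle[\![\overline{x}]\!],\overline{y}\rangle, sk) = \overline{x}$ follows from Paillier correctness applied to the $[\![\overline{x}]\!]$ sent by the DA in line~\ref{OOPE_DA_CSP_Cipher}. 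This establishes all three equations of Definition~\ref{DefCorrectness}.

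The main obstacle I anticipate is the subtle interaction between the fixed loop count $h$ and the ``continue comparing the current node'' behavior when the traversal has already stopped: once $b_e = 0$ or a null child is reached, the protocol keeps the current node fixed and re-runs \textsc{Compare} on it until $count = h$. I would need to argue this causes no harm — namely that repeated comparisons against the final node do not alter $b_e$ (it stays $0$ in the equality case) and that in the null-child case the last \emph{nonzero}-producing comparison is the one whose $b_g$ is used by \textsc{Encrypt}. This requires being careful that line~\ref{Compare_Returned_Inequality}'s guard only updates $[\![x]\!]$ when $[\![x_{next}]\!] \neq$ \textbf{NIL}, so the ``stuck'' node and the relevant $b_g$ from the last productive step persist correctly through the remaining iterations; I would spell out that the value of $b_g$ read after the loop is indeed the one corresponding to the bracketing relation. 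A secondary point worth stating explicitly is that $h = \log_2 n$ suffices as a depth bound only because the OPE-tree is kept balanced by the rebalancing operations, which is exactly the invariant maintained by mOPE$_2$; I would cite this rather than re-prove it.
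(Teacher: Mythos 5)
Your proposal is correct and follows the same route as the paper, but the paper's own proof is far terser: it only verifies the first part of your argument, namely that the CSP correctly reconstructs $b_e$ and $b_g$ from the XOR shares via Equation~\ref{Eqn_GC_Result_Check}, and leaves the tree-traversal, the \textsc{Encrypt} bracketing, and the fixed-loop-count subtlety entirely implicit (deferring to the correctness of mOPE$_2$). Your additional steps --- the cancellation of the offset $r$, the argument that the ``stuck'' node persists through the remaining iterations so the final $b_g$ is the right one, and the explicit appeal to tree balance for $h = \log_2 n$ --- are all sound and in fact fill genuine gaps the paper glosses over.
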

\begin{proof}
Let $b_g=(if ~ \overline{x}>x ~ then ~ 1 ~ else ~ 0)$ and $b_e = (if ~ \overline{x} \neq x ~ then ~ 1 ~ else ~ 0)$. 
From inputs $(b_a, b'_a, \overline{x}+r)$ of the DA and $(b_o, b'_o,\overline{x}+r)$ of the DO the garbled circuit returns $(b_e\oplus b_a\oplus b_o, b_g\oplus b'_a \oplus b'_o)$ to the DA and $(b_e\oplus b_a\oplus b_o, b_g\oplus b'_a\oplus b'_o)$ to the DO. Then the DA resp. the DO sends $(b_a, b'_a, b_e\oplus b_o, b_g \oplus b'_o)$ resp. $(b_o, b'_o, b_e\oplus b_a, b_g\oplus b'_a)$ to the CSP. With Equation \ref{Eqn_GC_Result_Check} the CSP can correctly deduce $b_e$ and $b_g$. 
\end{proof}

\begin{theorem}[Security] 
\label{Theorem_OOPE_Security}
The protocol $\Pi$\textsubscript{\tiny \normalfont OOPE} securely implements the OOPE functionality in the semi-honest model with honest majority.
\end{theorem}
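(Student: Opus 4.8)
The plan is to use the standard simulation paradigm of \cite{Goldreich.2004}. Since we are in the semi-honest model with an honest majority and only three parties, it suffices to handle the corruption of a single party, so I would construct the three simulators $S_{DO}$, $S_{DA}$, $S_{CSP}$ of Definition~\ref{DefSecurity} independently. I would argue modularly: first prove security in a hybrid world in which the garbled-circuit part of Protocol~\ref{Comparison_Protocol} is the ideal comparison functionality (exactly as the protocol is already written), and then compose, via the sequential composition theorem for semi-honest protocols, with the simulator of Yao's protocol from \cite{LindellP.2009} invoked once per comparison round (DO acting as generator, DA as evaluator). The only real work is then the hybrid-world simulators plus a short chain of indistinguishability steps; each step replaces a real Paillier ciphertext by an encryption of an independent value (semantic security), a real blinded value $x+r$ by a uniform value (statistical, with the $2^{-k}$ slack from the $(l+k)$-bit mask), and a real masked output bit by a uniform bit (perfect, since one of the XOR masks is chosen by the honest party).

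For $S_{DO}$ (given $n$ and $sk$): in each of the $h$ rounds the DO receives $[\![x+r]\!]$ and the decoded circuit output $(b_e\oplus b_a\oplus b_o,\,b_g\oplus b'_a\oplus b'_o)$. The simulator samples $\tilde v$ from the near-uniform distribution of $x+r$, outputs $[\![\tilde v]\!]$ under $pk$, and samples the two output bits uniformly (they are one-time-pad masked by $b_a,b'_a$, which are unknown to the DO); it repeats this $h$ times, and the DO has no further message and no output. Plugging in Yao then adds the generator's view, which the Lindell–Pinkas simulator produces from the (masked, hence uniform) output alone. For $S_{DA}$ (given $n$, $\overline x$, $\overline y$): in each round the DA receives a fresh uniform $(l+k)$-bit $r$ and the same masked output pair, which from the DA's standpoint is uniform because $b_o,b'_o$ are unknown to it; the simulator samples these, then (when the circuit is instantiated) runs the Lindell–Pinkas evaluator simulator on evaluator-input $\overline x+r$ and the sampled uniform output; after $h$ rounds it appends $\overline y$.

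For $S_{CSP}$ (given $\mathbb{S}=\langle\mathcal T,\mathbb T\rangle$ and $\langle[\![\overline x]\!],\overline y\rangle$): the CSP learns, per round, the four uniform masks $b_a,b'_a,b_o,b'_o$ and the true comparison bits $b_e,b_g$ of that round (via Equation~\ref{Eqn_GC_Result_Check}), and finally $[\![\overline x]\!]$. So the simulator must (i) recover the whole $h$-round sequence of $(b_e,b_g)$ from $\mathbb{S}$ and $\overline y$, (ii) sample the four masks uniformly and set the XORed coordinates consistently, and (iii) echo $[\![\overline x]\!]$. For (i): in the deterministic mOPE$_2$ setting the root-to-position path of $\overline x$ in $\mathcal T$ is a deterministic function of $\mathcal T$ and the rank of $\overline x$, and $\overline y$ together with $\mathbb T$ pins down between which two consecutive entries (or which entry) $\overline x$ lies — this is precisely running line~\ref{Encrypt_Trivvial}/Algorithm~\ref{Encryption_Algorithm} in reverse. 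Given the stopping node, the simulator also reconstructs the $h-\mathrm{depth}$ padding rounds (after traversal stops the protocol re-uses the current node): on an equality stop every remaining round has $b_e=0$, and on a NIL-child stop every remaining round compares $\overline x$ against the same leaf, yielding a constant $(b_e,b_g)=(1,\mathbf 1[\overline x>x_{\mathrm{leaf}}])$. Hence the full bit sequence is a function of the simulator's inputs, the masks match the real ones in distribution exactly, and $[\![\overline x]\!]$ is identical.

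The main obstacle is step (i) for $S_{CSP}$: proving that the complete $h$-round sequence of revealed bits $(b_e,b_g)$ is recoverable from $\mathbb{S}$ and $\overline y$ alone, correctly accounting for the padding iterations and for the fact that the CSP traverses the pre-insertion tree (so the reconstruction must use $\mathcal T$ before any rebalancing that this very insertion might trigger in Algorithm~\ref{Encryption_Algorithm}). This is the ``significant complication'' flagged in the overview — namely that $\overline y$ is correlated with the traversal path — but it bites only the CSP, which is entitled to that information; the DA never sees $b_e,b_g$ unmasked, so $S_{DA}$ does not need any such argument. A secondary, purely technical point is that the $[\![x+r]\!]/x+r$ messages give only computational/statistical indistinguishability, so the conclusion is computational indistinguishability in the sense of Definition~\ref{DefSecurity} rather than perfect.
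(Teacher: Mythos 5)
Your proposal is correct and follows essentially the same route as the paper's proof: three independent simulators in the semi-honest, honest-majority setting, composition with the Lindell--Pinkas simulator for Yao's protocol, uniformity of the randomized inputs $x+r$ and of the XOR-masked output bits for the DO and the DA, and, for the CSP, reconstruction of the full sequence of revealed $(b_e,b_g)$ bits by re-deriving the root-to-insertion-point path from $\mathbb{S}$ and $\overline{y}$. Your explicit handling of the padding rounds after an equality or NIL stop is, if anything, slightly more careful than the paper's corresponding case analysis.
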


\subsection{Dealing with a malicious activities}
As above we assume honest majority. In fact malicious DA or DO can only cheat in Yao's protocol or by returning a fake output of the comparison step to the CSP. Results of the comparison step can be checked with Equation \ref{Eqn_GC_Result_Check}. For cheating in Yao's protocol, there are solutions based on the cut-and-choose technique that deal with malicious parties \cite{LindellP.2007, LindellR.2014, Lindell.2016}. So in this section we concentrate on the malicious CSP. Recall that the CSP holds the OPE-table that is a set of ciphertexts $\left\langle [\![x]\!], y \right\rangle$, where $[\![x]\!]$ is a Paillier ciphertext under public key $pk$. Our goal is to prevent a malicious CSP to replace the $\left\langle [\![x]\!], y \right\rangle$ with self-generated ciphertexts $\left\langle [\![\tilde{x}]\!], \tilde{y} \right\rangle$. The solution consists in computing for each ciphertexts $\left\langle [\![x]\!], y \right\rangle$ a message authentication code (MAC) that will constraint the CSP to use valid $[\![x]\!]$ in the OOPE protocol.

\noindent\textbf{Discrete Logarithm.} Our first solution is to use discrete logarithm. In the initialization step the DO computes for each node $[\![x]\!]$ a MAC $(g^x \bmod p)$ where $g$ and the prime number $p$ are unknown to the CSP. Then in each comparison step the CSP sends $[\![x+r]\!]$ to the DO and $(r, g^x \bmod p)$ to the DA. The DO reveals $g$, $p$ and $m = (g^{x+r} \bmod p)$ to the DA after decryption. Finally the DA computes $(g^x g^r \bmod p) = (g^{x+r} \bmod p)$ and checks if it is the same as $m$. Only if the check succeeds they engage in the oblivious comparison protocol. Besides the OOPE protocol the DO can also use the discrete logarithm solution to check the integrity of its database. 

\noindent\textbf{Pedersen Commitment.} The above solution with discrete logarithm is not perfectly hiding and only based on the fact that solving discrete logarithm is computationally difficult.~However if the space of possible value of $x$ is small, anyone who knows $g$ and $p$ could simply try them all. Hence, this solution hide $x$ only to the CSP, because the CSP does not know $g$ and $p$. However, the solution is vulnerable to the DA. To perfectly hide $x$ the hiding scheme must be semantically secure. We therefore propose a solution based on Pedersen commitment \cite{Pedersen.1991}. The DO chooses $g$ and $p$ as above and another number $h$ and reveals them only to the DA. In the initialization step the DO stores each node $[\![x]\!]$ with the pair $([\![a]\!], g^xh^a \bmod p$), where $a$ is random. In the first step of the comparison protocol the CSP chooses two random numbers $r$, $r'$, computes and sends $c_x = [\![x+r]\!]$, $c_a = [\![a+r']\!]$ to the DO. Then the CSP sends $(g^xh^a \bmod p)$, $r$ and $r'$ to the DA. Next the DO decrypts both ciphertexts $c_x, c_a$ and sends $m = (g^{x+r}h^{a+r'} \bmod p)$ to the DA. Finally the DA computes $m' = (g^xh^a * g^rh^{r'} \bmod p) = (g^{x+r}h^{a+r'} \bmod p)$ and checks if $m$ and $m'$ are equal.

Notice that, a malicious CSP can still corrupt the homomorphic ciphertext in the OPE-table, as homomorphic encryption is malleable. Additionally, the server may also alter its responses to the client in an attempt to learn additional information on top of the order of encrypted values \cite{PopaLZ.2013}. To force the server to perform these operations correctly, we can adapt the idea of \cite{PopaLZ.2013} that consists of adding Merkle hashing on top of the OPE-tree and to use it to check the correctness of the server’s responses.

\begin{figure}
\centering
\resizebox{0.5\textwidth}{!} {
\begin{tikzpicture}[>=triangle 45,
    square/.style={draw, minimum size=15mm,
    }]
		\draw
			node[square, label={[anchor=east]25:$c_{e,0}$}, label={[anchor=east]-25:$c_{g,0}$} ] (A0) {} 
			node[square, left=5mm of A0, label={[anchor=east]25:$c_{e,1}$}, label={[anchor=east]-25:$c_{g,1}$}] (A1) {}
			node[transparent=yes, square, left=5mm of A1] (AF) {}
			node[draw=none] (space) [left= 0.5cm of A1]  {$~~~$}
			node[draw=none] (spacea) [above= 0.8mm of space]  {$...$}
			node[draw=none] (spaceb) [below= 0.8mm of space]  {$...$}
			node[square, left=5mm of space, label={[anchor=east]25:$c_{e,l-1}$}, label={[anchor=east]-25:$c_{g,l-1}$}] (Al) {}
			node[transparent=yes, square, right=5mm of Al] (AG) {}
			node[transparent=yes, square, right=5mm of A0] (A00) {}
			node[square, left=5mm of Al, label={[anchor=east]25:$c_{e,l}$}, label={[anchor=east]-25:$c_{g,l}$}] (AO) {$\oplus$}
			node[transparent=yes, square, left=5mm of AO, label={[anchor=east]25:$c_{e}$}, label={[anchor=east]-25:$c_{g}$}] (AH) {}
			node[draw=none] (spaceAO1) [below= 0.5cm of AO.245]  {$b'_{x}$}
			node[draw=none] (spaceAO2) [below= 0.5cm of AO.295]  {$b'_{\overline{x}}$}
			;

    \draw[<-] (A1.115) --++(90:0.5cm) node [above] {$x_1$};
    \draw[<-] (A1.65) --++(90:0.5cm) node [above] {$\overline{x}_1$};
    \draw[<-] (A0.115) --++(90:0.5cm) node [above] {$x_0$};
    \draw[<-] (A0.65) --++(90:0.5cm) node [above] {$\overline{x}_0$};
		
		\draw[<-] (Al.115) --++(90:0.5cm) node [above] {$x_{l-1}$};
    \draw[<-] (Al.65) --++(90:0.5cm) node [above] {$\overline{x}_{l-1}$};
		
		\draw[<-] (AO.115) --++(90:0.5cm) node [above] {$b_{x}$};
    \draw[<-] (AO.65) --++(90:0.5cm) node [above] {$b_{\overline{x}}$};
		
		\draw[->] (spaceAO1) -- (AO.245) node [below] {};
		\draw[->] (spaceAO2) -- (AO.295) node [below] {};

    \draw[->] (A00.155) -- (A0.25) node [right = 0.45cm] {0};
		\draw[->] (A00.205) -- (A0.-25) node [right = 0.45cm] {0};
		
		\draw[->] (A0.155) -- (A1.25);
		\draw[->] (A0.205) -- (A1.-25);
		
		\draw[->] (AG.155) -- (Al.25);
		\draw[->] (AG.205) -- (Al.-25);

		\draw[->] (A1.155) -- (AF.25) ;
		\draw[->] (A1.205) -- (AF.-25) ;
		
		\draw[->] (Al.155) -- (AO.25) ;
		\draw[->] (Al.205) -- (AO.-25) ;
		
		\draw[->] (AO.155) -- (AH.25) ;
		\draw[->] (AO.205) -- (AH.-25) ;
		
\end{tikzpicture}
}
\caption{ Overview of the Garbled Circuit GC$_{=, >}$ for comparison and equality test. \textmd{Each box for $ i = 0, \ldots, l-1$ is a 1-bit circuit for equality and greater than test and outputs $ c_{e, j+1} = (x_j \oplus \overline{x}_j) \vee c_{e, j}$ (Equation \ref{Eqn_Equality_Test}) and $c_{g,j+1} =(x_j \oplus c_{g, j}) \wedge (\overline{x}_j \oplus c_{g, j}) \oplus x_j$ (Equation \ref{Eqn_Greater_Than_Test}) resp.. The last circuit implements exclusive-or operation and outputs $c_e = c_{e, l} \oplus b_x \oplus b_{\overline{x}} $ and $c_g =c_{g, l} \oplus b'_x \oplus b'_{\overline{x}}$.}} 
\label{GarbledCircuit}
\end{figure}
\section{Protocol for Integer Comparison}
\label{Sec_Int_Comparison}
For our oblivious OPE protocol we needed garbled circuit for comparison and equality test and adapted the garbled circuits of \cite{KolesnikovS.2008, KolesnikovSS.2009} to our needs. Firstly, instead of implementing one garbled circuit for comparison and another one for equality test, we combined both in the same circuit. This allows to use the advantage that almost the entire cost of garbled circuit protocols can be shifted into the setup phase. In Yao's protocol the setup phase contains all expensive operations (i.e., computationally expensive OT and creation of GC, as well as the transfer of GC that dominates the communication complexity) \cite{KolesnikovSS.2009}. Hence, by implementing both circuits in only one we reduce the two costly setup phases to one as well. Secondly, in our oblivious OPE protocol, integer comparison is an intermediate step, hence the output should not be revealed to the parties participating in the protocol, since this will leak information. Thus the input of the circuit contains a masking bit for each party that is used to mask the actual output. Only the party that receives the masked output and both masking bits can therefore recover the actual output. Let GC$_{=,>}$ denote this circuit.

Let $P_1$, $P_2$ be party one and two resp. and let $x = x_{l-1}, \ldots, x_{0}$, $\overline{x} = \overline{x}_{l-1}, \ldots, \overline{x}_{0}$ be their respective inputs in binary representation. Parties $P_1$ and $P_2$ choose masking bits $b_x$, $b'_x$, $b_{\overline{x}}$, $b'_{\overline{x}}$ and extend their input to $(b_x, b'_x, x_{l-1}, \ldots, x_{0})$, $(b_{\overline{x}}, b'_{\overline{x}}, \overline{x}_{l-1}, \ldots, \overline{x}_{0})$ respectively. An overview of the circuit is illustrated in Figure \ref{GarbledCircuit}.

For equality test  we use Equation\footnote{In $c_{e, j}$ and $c_e$, $e$ stands for \textbf{equality test} and $j$ is the bit index} \ref{Eqn_Equality_Test}. The two first lines are from \cite{KolesnikovS.2008} and test from $0$ to $l-1$ if the bits are pairwise different (i.e their exclusive-or is 1). If not we use the result of the previous bit test. Initially, this bit is set to $0$.
\begin{equation}
  \begin{cases}
    c_{e, 0}   &= 0  \\
    c_{e, j+1} &= (\overline{x}_j \oplus x_j) \vee c_{e, j}, ~ j = 0, \ldots, l-1 \\
    c_e        &= c_{e, l} \oplus b_x \oplus b_{\overline{x}} 
  \end{cases}
	\label{Eqn_Equality_Test}
\end{equation}
The actual output of the circuit $c_{e, l}$ is $1$ if $x$ and $\overline{x}$ are different and $0$ otherwise (i.e. $c_{e, l} = \left[\overline{x}\neq x\right]?1:0$). Then we blind $c_{e, l}$ by applying exclusive-or operations with the masking bits $b_x$ and $b_{\overline{x}}$.

The comparison functionality is defined as (if $\overline{x}>x$ then $1$ else $0$) (i.e $\left[\overline{x}>x\right]?1:0$). In \cite{KolesnikovSS.2009} the circuit is based on the fact that $\left[\overline{x}>x\right] \Leftrightarrow \left[\overline{x} - x - 1 \geq 0\right]$ and is summarized in Equation\footnote{In $c_{g, j}$ and $c_g$, $g$ stands for \textbf{greater than} and $j$ is as above} \ref{Eqn_Greater_Than_Test}, where again the two first are from \cite{KolesnikovSS.2009}. The second line represents the 1-bit comparator which depends on the previous bit comparison. This is initially $0$.
\begin{equation}
  \begin{cases}
    c_{g,0}   &=0 \\
    c_{g,j+1} &=(\overline{x}_j \oplus c_{g, j}) \wedge (x_j \oplus c_{g, j}) \oplus \overline{x}_j, ~ j = 0, \ldots, l-1 \\
    c_g       &=c_{g, l} \oplus b'_x \oplus b'_{\overline{x}} 
  \end{cases}
	\label{Eqn_Greater_Than_Test}
\end{equation}
Again the actual output $c_{g, l}$ is blinded by applying exclusive-or operations with the masking bits $b'_x$ and $b'_{\overline{x}}$.

\section{OOPE with Frequency-hiding OPE}
\label{Sec_OOPE_Protocol_FHOPE}
In this section we consider the case where the underlying OPE is not deterministic as in \cite{Kerschbaum.2015}. As above the first step is the initialization procedure (Section \ref{Initialization}). It remains the same with the difference that the tree traversal and the encryption algorithms work as in Kerschbaum's scheme mOPE$_3$. Hence if the equality test returns true (line \ref{Traverse_Stops_With_Equality} of Protocol \ref{OOPE_Protocol}), the CSP chooses a random coin and then traverses the tree to the left or to the right depending on the outcome of the coin. The order $\overline{y}$ of $\overline{x}$ is computed as $\overline{y} = y_{i-1} + \lceil \frac{y_{i} - y_{i-1}}{2} \rceil$ resp. $\overline{y} = y_{i} + \lceil \frac{y_{i+1} - y_{i}}{2} \rceil$ if the algorithm is inserting $\overline{x}$ left resp. right to a node $[\![x_{i}]\!]$ with corresponding order $y_{i}$. However, the equality test leaks some information, as it allows the CSP to deduce from the OPE-table that certain nodes have the same plaintext. Therefore it would be preferable to implement the random coin in the secure computation.
\subsection{Implementing the random coin in garbled circuit}
In the following $\overline{x}$ and $x$ represent as before the inputs of the DA and the DO in the oblivious comparison respectively, and GC$_{=, >}^{u}$ represents the unmasked comparison circuit\footnote{This is the sub-circuit of Figure \ref{GarbledCircuit} that operates on the real input bits (from 0 to $l-1$) without the masking bits.} that outputs the bits $b_e$ as result of the equality test and $b_g$  as result of the greater than comparison. The idea is to adapt the garbled circuit for integer comparison (Section \ref{Sec_Int_Comparison}) such that its output allows to traverse the tree randomly as in \cite{Kerschbaum.2015}, but without revealing the result of the equality test to the CSP.
\begin{lemma}
Let $r_{\overline{x}}$ and $r_{x}$ be some DA's and DO's random bits and $b_{r} = r_{\overline{x}} \oplus r_{x}$. Then extending the circuit GC$_{=, >}^{u}$ to the circuit GC$_{b}^{u}$ with additional input bits $r_{\overline{x}}$, $r_{x}$ and with output $b = (b_{e} \wedge b_{g}) \vee (\neg b_{e} \wedge b_{r}) $ traverses the tree as required.
\end{lemma}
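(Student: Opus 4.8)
The statement is essentially a verification that the Boolean formula $b = (b_e \wedge b_g) \vee (\neg b_e \wedge b_r)$ realizes exactly the traversal rule of mOPE$_3$. So the plan is to argue by cases on the relation between $\overline{x}$ (the DA's value being inserted) and $x$ (the DO's value at the current node), using the semantics of the unmasked comparison sub-circuit GC$_{=,>}^u$ as stated in the previous section: by Equation~\ref{Eqn_Equality_Test} (restricted to the real bits) the bit $b_e$ equals $[\overline{x}\neq x]$, and by Equation~\ref{Eqn_Greater_Than_Test} the bit $b_g$ equals $[\overline{x}>x]$. I would also record that $b_r = r_{\overline{x}}\oplus r_x$ is a uniformly random bit whenever at least one of the two parties picks its bit uniformly and independently, which is what makes it a legitimate ``random coin''.

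\textbf{Key steps.} First I would fix notation: recall that in mOPE$_3$ the traversal proceeds to the right child iff $\overline{x}>x$, to the left child iff $\overline{x}<x$, and, when $\overline{x}=x$, to the left or right according to an unbiased coin; and recall that in Algorithm~\ref{TreeTraversal_Algorithm} the CSP goes right exactly when the returned greater-than bit is $1$. So it suffices to show that $b=1$ precisely in the cases where mOPE$_3$ would go right. Second, case $\overline{x}>x$: then $b_e=1$ and $b_g=1$, hence $b_e\wedge b_g=1$ and $\neg b_e\wedge b_r=0$, so $b=1$ — go right, as required. Third, case $\overline{x}<x$: then $b_e=1$ and $b_g=0$, hence $b_e\wedge b_g=0$ and $\neg b_e\wedge b_r=0$, so $b=0$ — go left, as required. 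Fourth, case $\overline{x}=x$: then $b_e=0$, so $b_e\wedge b_g=0$ and $b=\neg b_e\wedge b_r=b_r$; since $b_r$ is an unbiased coin independent of the inputs, the traversal goes right with probability $1/2$ and left with probability $1/2$, exactly matching the random coin of mOPE$_3$. Concatenating the three cases, the output bit $b$ induces precisely the mOPE$_3$ traversal, so feeding $b$ to Algorithm~\ref{TreeTraversal_Algorithm} in place of $b_g$ traverses the tree as required. Finally I would remark that the CSP learns only $b$ (XOR-masked in the full protocol exactly as $b_g$ was), and in particular never sees $b_e$ on the equality branch, which is the point of the construction.

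\textbf{Main obstacle.} There is no real obstacle in the combinatorics — it is a three-line truth-table check. The only subtlety worth stating carefully is the randomness bookkeeping: one must make sure $b_r$ is genuinely uniform and independent of $(\overline{x},x)$ and of everything the CSP sees, which follows because it is the XOR of an honestly-sampled bit from the DO and one from the DA, and because in the semi-honest model at most one of them is controlled by the adversary so at least one summand is fresh uniform randomness unknown to the CSP. I would also note (or defer to the security proof) that because $b$ depends on $b_e$ only through the already-randomized branch, revealing $b$ to the CSP does not reveal the equality bit, so the information leak of the plaintext-equality pattern that occurred in the deterministic variant is removed; making this precise is a simulation argument rather than part of this lemma, so I would only flag it here.
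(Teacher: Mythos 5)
Your proof is correct and follows essentially the same route as the paper's: a case split on $b_e=[\overline{x}\neq x]$ showing that $b$ reduces to $b_g$ when the values differ and to the fresh coin $b_r$ when they are equal, together with the observation that the single output bit does not reveal which branch occurred. The paper's version is just a terser form of your argument (it does not even split the $\overline{x}\neq x$ case into $>$ and $<$), so no changes are needed.
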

\begin{proof}
If $\overline{x} \neq x$ then $b_e = 1$ and $b = b_g$, hence the algorithm traverses the tree depending on the greater than comparison. Otherwise $\neg b_{e} = 1$, hence $b$ is the random bit $b_r$ and the tree traversal depends on a random coin. In each case the circuit returns either 0 or 1, and does not reveal if the inputs are equal.
\end{proof}
Now the circuit GC$_{b}^{u}$ can also be extended to the circuit GC$_{b}$ by using the masking bits $b_a$ and $b_o$ for the DA and the DO respectively as described in Section \ref{Sec_Int_Comparison}. The output is then $((b_{g} \wedge b_{e}) \vee (\neg b_{e} \wedge b_{r}))\oplus b_a\oplus b_o$.

However, care has to be taken when returning the random bit. Recall that the protocol loops $h$ times to prevent 
the DA and DO from learning the right number of comparisons. Hence if we reach equality before having performed $h$
comparisons the garbled circuit computation must keep returning the same random bit to prevent leaking that information
to the CSP. Therefore DA and DO must keep track on shares of $b_e$ and $b_r$ which are extra inputs to the circuit. Let $\hat{b}_e, \hat{b}_r$ be the previous equality bit (initially 1, e.g. 0 for DO and 1 for DA) and random bit (initially 0), then the garbled circuit must execute the following procedure: If $b_e = 0$ then check if $\hat{b}_e = 0$ and return $\hat{b}_r$ otherwise return $b_r$. If $b_e \neq 0$ then return $b_g$.

\subsection{Dealing with queries}
So far we have computed the ciphertext in the non-deterministic case. However, as Kerschbaum pointed out \cite{Kerschbaum.2015} this ciphertext cannot be directly used to query the database. As in the deterministic case let $x$ and $y$ be symbols for plaintext and order respectively. Since a plaintext $x$ might have many ciphertexts let $c_{min}$ and $c_{max}$ be respectively the minimum and maximum order of $x$, hence:
\begin{equation}
  \begin{cases}
		c_{min}(x) = \mbox{min}(\{y:\textsc{Decrypt}\textsubscript{\tiny OPE}(\langle [\![x]\!], y \rangle, sk)=x\})  \\
		c_{max}(x) = \mbox{max}(\{y:\textsc{Decrypt}\textsubscript{\tiny OPE}(\langle [\![x]\!], y \rangle, sk)=x\})
  \end{cases}
	\label{Eqn_FHOPE_Min_Max_Cipher}
\end{equation}

Thus, a query $[a, b]$ must be rewritten in $[c_{min}(a), c_{max}(b)]$. Unfortunately, in Kerschbaum's scheme the $c_{min}(x), c_{max}(x)$ are only known to the DO, because they reveal to the server the frequency of plaintexts. Recall that the goal of \cite{Kerschbaum.2015} was precisely to hide this frequency from the CSP.

Instead of returning $\overline{y}$ to the DA, which is useless for queries, our goal is to allow the DA to learn $c_{min}(\overline{x})$ and $c_{max}(\overline{x})$ and nothing else. The CSP learns only $\langle [\![\overline{x}]\!], \overline{y} \rangle$ as before and the DO learns nothing besides the intermediate messages of the protocol. We begin by proving the following lemma.
\begin{lemma}
Let $y_{i}$, $y_{i+1}$ be the order (i.e. $y_{i}<y_{i+1}$) of already encrypted plaintexts $x_{i}, x_{i+1}$ (i.e. $x_{i} \leq x_{i+1}$). Let $x$ be a new plaintext with corresponding order $y$ such that $x_{i} \leq x \leq x_{i+1}$. Then it holds: $c_{min}(x) \in \{c_{min}(x_{i}), y\}$ and $c_{max}(x) \in \{c_{max}(x_{i+1}), y\}$.
\label{New_Min_Max_Order_Lemma}
\end{lemma}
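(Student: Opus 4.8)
The plan is to reason directly about the structure of the mOPE$_3$ OPE-tree and how orders are assigned. Recall that in mOPE$_3$ every plaintext occurrence is a node of the tree, an occurrence of plaintext $x$ may appear more than once, and $c_{min}(x)$ (resp. $c_{max}(x)$) is the order of the leftmost (resp. rightmost) tree node carrying plaintext $x$ in the in-order traversal. The new plaintext $x$ with $x_i \le x \le x_{i+1}$ gets inserted as a fresh node with order $y$ where $y_i < y < y_{i+1}$ (ignoring the rebalancing corner case, which I would address separately — see below). First I would observe that after the insertion, the set of nodes carrying plaintext $x$ is exactly the old set of such nodes (whose orders lie among $y_{i}, y_{i+1}, \dots$ appropriately) together with the single new node of order $y$. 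Hence $c_{min}(x) = \min(c_{min}^{old}(x), y)$ and $c_{max}(x) = \max(c_{max}^{old}(x), y)$, where $c_{min}^{old}, c_{max}^{old}$ refer to the state before insertion (and are $+\infty$ / $-\infty$, i.e. vacuous, if $x$ did not occur before).

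The second step is to identify $c_{min}^{old}(x)$ with $c_{min}(x_i)$ and $c_{max}^{old}(x)$ with $c_{max}(x_{i+1})$ in the relevant cases. Here I would split on whether $x = x_i$, $x = x_{i+1}$, or $x_i < x < x_{i+1}$. If $x = x_i$, then the old occurrences of $x$ are precisely the old occurrences of $x_i$, so $c_{min}^{old}(x) = c_{min}(x_i)$; moreover since $x \le x_{i+1}$ and all orders of $x_{i+1}$-nodes are $> y$, we still get $c_{max}(x) = \max(c_{max}(x_{i+1}), y)$ trivially if $x \ne x_{i+1}$, or it collapses into the $x=x_{i+1}$ case. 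Symmetrically if $x = x_{i+1}$. If $x_i < x < x_{i+1}$ strictly, then $x$ did not occur before at all, so $c_{min}^{old}(x)$ and $c_{max}^{old}(x)$ are vacuous and both $c_{min}(x) = y$ and $c_{max}(x) = y$, which is consistent with membership in $\{c_{min}(x_i), y\}$ and $\{c_{max}(x_{i+1}), y\}$. In every case the claimed membership $c_{min}(x) \in \{c_{min}(x_i), y\}$ and $c_{max}(x) \in \{c_{max}(x_{i+1}), y\}$ follows. A small point worth spelling out is why $c_{min}(x_i)$ is unchanged by the insertion when $x > x_i$: the new node sits to the right of every $x_i$-node, so it cannot become the leftmost $x_i$-node; the analogous remark holds for $c_{max}(x_{i+1})$.

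The step I expect to be the main obstacle is the rebalancing case, i.e. when $y_{i+1} - y_i = 1$ and the tree must be rebalanced before $y$ can be assigned (line \ref{Balance_Condition} of Algorithm \ref{Encryption_Algorithm}, and the corresponding step in mOPE$_3$). Rebalancing mutates the orders of already-encrypted values, so $c_{min}(x_i)$ and $c_{max}(x_{i+1})$ as referenced in the lemma must be understood as the \emph{post-rebalancing} values, and I would state the lemma's quantities with that convention and argue that rebalancing is order-preserving — it only reassigns orders while keeping the in-order sequence of nodes fixed — so the identification of occurrence-sets with $x_i$- and $x_{i+1}$-occurrences is untouched, and the argument of the previous paragraph goes through verbatim with the updated orders. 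I would also note that this lemma is exactly what is needed downstream: it reduces the problem of computing $c_{min}(\overline{x}), c_{max}(\overline{x})$ for the DA to comparing the freshly computed $\overline{y}$ against two values the DO already knows, which can then be folded into the garbled-circuit computation.
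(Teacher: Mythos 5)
Your proof is correct and follows essentially the same route as the paper's: a case split on $x = x_i$, $x_i < x < x_{i+1}$, and $x = x_{i+1}$, identifying the old occurrences of $x$ with those of $x_i$ or $x_{i+1}$ and otherwise concluding $c_{min}(x)=c_{max}(x)=y$. Your extra treatment of rebalancing is sound but not needed for the paper's argument, since the paper simply assumes (in the surrounding text) that rebalancing never occurs, its probability being negligible for suitably large $M$.
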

\begin{proof}
If $x_{i} = x$ then by definition of $c_{min}$ we have $c_{min}(x) = c_{min}(x_{i})$. If $x_{i} < x$ and $x < x_{i+1}$ then $x$ occurs only once in the data set and it holds $c_{min}(x) = c_{max}(x) =y$. Otherwise $x$ is equal to $x_{i+1}$, but since $x$ is new and by assumption $x \leq x_{i+1}$ the algorithm is inserting $x$ right to $x_{i}$ and left to $x_{i+1}$ hence $y_{i} < y < y_{i+1}$ must hold. Then by definition again $c_{min}(x) = y$. \\ For the case of max the proof is similar.
\end{proof}

\begin{corollary}
Let $x, x_{i}, x_{i+1}, y, y_{i}, y_{i+1}$ be as above and let $b_{i} = [x_{i}=x]?1:0$ resp. $b_{i+1}=[x=x_{i+1}]?1:0$ then it holds: 
$c_{min}(x) = b_{i} \cdot c_{min}(x_{i}) + (1-b_{i})\cdot y$, resp. $c_{max}(x) = b_{i+1} \cdot c_{max}(x_{i+1}) + (1-b_{i+1})\cdot y$.
\label{New_Min_Max_Order_Formula}
\end{corollary}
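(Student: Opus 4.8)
The statement to prove is Corollary~\ref{New_Min_Max_Order_Formula}, which simply rewrites the set-membership conclusion of Lemma~\ref{New_Min_Max_Order_Lemma} as an arithmetic identity using the indicator bits $b_i$ and $b_{i+1}$. The plan is to derive it directly as a case analysis driven by Lemma~\ref{New_Min_Max_Order_Lemma}, observing that the two cases in $c_{min}(x) \in \{c_{min}(x_i), y\}$ are precisely selected by whether $x_i = x$ or not, i.e.\ by the bit $b_i$.

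First I would handle $c_{min}$. By Lemma~\ref{New_Min_Max_Order_Lemma}, $c_{min}(x) \in \{c_{min}(x_i), y\}$. I would then split on $b_i$. If $b_i = 1$, then $x_i = x$, and the proof of Lemma~\ref{New_Min_Max_Order_Lemma} shows $c_{min}(x) = c_{min}(x_i)$; substituting $b_i = 1$ into $b_i \cdot c_{min}(x_i) + (1-b_i)\cdot y$ yields exactly $c_{min}(x_i)$, so the identity holds. If $b_i = 0$, then $x_i < x$ (recall the standing assumption $x_i \le x \le x_{i+1}$ and that $x$ is a new plaintext), and in both remaining sub-cases of Lemma~\ref{New_Min_Max_Order_Lemma} — namely $x_i < x < x_{i+1}$, and $x_i < x = x_{i+1}$ — the proof establishes $c_{min}(x) = y$; substituting $b_i = 0$ into the formula gives $y$, matching. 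Hence $c_{min}(x) = b_i \cdot c_{min}(x_i) + (1-b_i)\cdot y$ in all cases.

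The argument for $c_{max}$ is symmetric: by Lemma~\ref{New_Min_Max_Order_Lemma}, $c_{max}(x) \in \{c_{max}(x_{i+1}), y\}$, and splitting on $b_{i+1} = [x = x_{i+1}]$, when $b_{i+1} = 1$ we get $c_{max}(x) = c_{max}(x_{i+1})$ and the formula evaluates to $c_{max}(x_{i+1})$, while when $b_{i+1} = 0$ we have $x < x_{i+1}$, so $x$ occurs only once (the $x_i < x < x_{i+1}$ case) or $x = x_i$ (the $x_i = x < x_{i+1}$ case), and in either situation the max-analogue in the proof of Lemma~\ref{New_Min_Max_Order_Lemma} gives $c_{max}(x) = y$, which the formula returns when $b_{i+1} = 0$.

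I do not anticipate a serious obstacle here — the corollary is essentially a reformulation, and the only point requiring care is checking that the cases of the two bits $b_i, b_{i+1}$ line up exactly with the cases in Lemma~\ref{New_Min_Max_Order_Lemma}, in particular noticing that $b_i$ and $b_{i+1}$ cannot both be $1$ unless $x_i = x = x_{i+1}$ (in which case both formulas still agree, since $c_{min}(x_i)$ and $c_{max}(x_{i+1})$ are the correct values), so no inconsistency arises. The mild subtlety worth stating explicitly in the write-up is why $b_i = 0$ forces $x_i < x$ rather than merely $x_i \ne x$: this uses the standing hypothesis $x_i \le x$.
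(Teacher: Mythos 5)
Your proposal is correct and follows the only natural route: the corollary is an arithmetic restatement of Lemma~\ref{New_Min_Max_Order_Lemma}, obtained by checking that the two branches of the set membership are selected exactly by the indicator bits, and the paper itself omits the proof for precisely this reason. Your case analysis on $b_i$ and $b_{i+1}$ (including the observation that $b_i=0$ forces $x_i<x$ via $x_i\le x$) matches the lemma's proof structure exactly.
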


Now we are ready to describe the solution. First we assume that tree rebalancing never happens, because it might update $c_{min}$ and $c_{max}$ for some ciphertexts. The CSP cannot update $c_{min}$ and $c_{max}$ without knowing the frequency. According to \cite{Kerschbaum.2015} the probability of rebalancing is negligible in $n$ for uniform inputs if the maximum order $M$ is larger than $2^{6.4\cdot \log_2n}$. For non-uniform input, smaller values of $M$ are likely.

The first step is to store besides each ciphertext $\langle [\![x]\!], y \rangle$ two ciphertexts $[\![c_{min}(x)]\!]$ and $[\![c_{max}(x)]\!]$. This is done by the DO during the initialization. Let $[\![x]\!]_{DA}$ be a ciphertext of $x$ encrypted with a Paillier public key, whose corresponding private key belongs to the DA. After the computation of $\overline{y}$ (Protocol \ref{OOPE_Protocol}) the CSP learns $\langle [\![\overline{x}]\!], \overline{y} \rangle$. Then the parties execute Protocol \ref{Min_Max_Order_Protocol} with $\langle\mathbb{S}, \langle [\![\overline{x}]\!], \overline{y} \rangle\rangle$ and $sk$ as input for the CSP and the DO respectively. The DA does not have any input, but receives alone the output of the protocol.
\begin{algorithm}[t]
\floatname{algorithm}{Protocol}
\caption{Min Max Order Protocol}
\begin{flushleft}
\textbf{Input ($In_{CSP}, In_{DA}, In_{DO}$)}: $(\langle\mathbb{S}, \langle [\![\overline{x}]\!], \overline{y} \rangle\rangle, \emptyset, sk)$ \\
\textbf{Output ($O_{CSP}, O_{DA}, O_{DO}$)}: $(\emptyset, \left\langle c_{min}(\overline{x}), c_{max}(\overline{x}) \right\rangle, \emptyset)$ \\
\textbf{Functionality }: \textsc{MinMaxOrder}($\mathbb{S}, \langle [\![\overline{x}]\!], \overline{y} \rangle, sk$)
\end{flushleft}
\hrulefill
\begin{algorithmic}[1]
	\State CSP: retrieve  $\langle [\![x_{i}]\!], y_{i} \rangle, \langle [\![x_{i+1}]\!], y_{i+1} \rangle$ s.t. $y_{i} < \overline{y} < y_{i+1}$
	\State CSP: choose random integers $s_{1}$, $s_{2}$, $r_1$, $r_2$
	\State CSP: compute $[\![d_{1}]\!] \leftarrow [\![(x_{i}-\overline{x})\cdot s_{1}]\!]$
	\State CSP: compute $[\![d_{2}]\!] \leftarrow [\![(x_{i+1}-\overline{x})\cdot s_{2}]\!]$
	\State CSP $\rightarrow$ DO: send $\langle [\![d_{1}]\!], [\![\overline{y}\cdot r_{1}]\!]_{DA}, [\![c_{min}(x_{i})\cdot r_{1}]\!] \rangle$
	\State CSP $\rightarrow$ DO: send $\langle [\![d_{2}]\!], [\![\overline{y}\cdot r_{2}]\!]_{DA}, [\![c_{max}(x_{i+1})\cdot r_{2}]\!] \rangle$
	\State CSP $\rightarrow$ DA: send $r_1$ and $r_2$
	\State DO: decrypt $[\![d_{1}]\!]$, $[\![d_{2}]\!]$ and evaluate Equation \ref{Eqn_FHOPE_Min_Max_Oblivious_Cipher} 
	\State DO $\rightarrow$ DA: send $[\![c_{min}(\overline{x})\cdot r_{1}]\!]_{DA}$ 
	\State DO $\rightarrow$ DA: send $[\![c_{max}(\overline{x})\cdot r_{2}]\!]_{DA}$
	\State DA: decrypt and output $c_{min}(\overline{x})$ and $c_{max}(\overline{x})$
\end{algorithmic}
\label{Min_Max_Order_Protocol}
\end{algorithm}
\begin{equation}
  \begin{cases}
	b_1 \leftarrow [d_{1}=0], b_2 \leftarrow [d_{2}=0] \\
	[\![c_{min}(\overline{x})\cdot r_{1}]\!]_{DA} \leftarrow b_{1}?[\![c_{min}(x_{i})\cdot r_{1}]\!]_{DA}:[\![\overline{y}\cdot r_{1}]\!]_{DA} \\
	[\![c_{max}(\overline{x})\cdot r_{2}]\!]_{DA} \leftarrow b_{2}?[\![c_{max}(x_{i+1})\cdot r_{2}]\!]_{DA}:[\![\overline{y}\cdot r_{2}]\!]_{DA}
  \end{cases}
	\label{Eqn_FHOPE_Min_Max_Oblivious_Cipher}
\end{equation}
Notice that for an input $\overline{x}$ of the DA the ciphertext $\langle [\![\overline{x}]\!], \overline{y} \rangle$ is not inserted in the database, but only in the OPE-table, because it cannot be included in the result of a query. Particularly, if $\langle [\![\overline{x}]\!], \overline{y} \rangle$ is no longer needed (e.g.: after the data analysis) it must be removed from the OPE-table. As stated in Lemma \ref{New_Min_Max_Order_Lemma}, if it happens that the new $\overline{x}$ with corresponding order $\overline{y}$ is inserted between $x_{i}$ and $x_{i+1}$ such that $x_{i} < \overline{x} = x_{i+1}$ then $c_{min}(\overline{x}) = \overline{y}$ implies that the previous $c_{min}(x_{i+1})$ should be updated to $\overline{y}$. However, as explained before this update is not necessary.

In Protocol \ref{Min_Max_Order_Protocol} the DO sees two semantically secure ciphertexts $[\![\overline{y}\cdot r_{1}]\!]_{DA}$ and $[\![\overline{y}\cdot r_{2}]\!]_{DA}$, which it cannot decrypt, and four randomized plaintexts $d_{1}, c_{min}(x_{i})\cdot r_{1}, d_{2}, c_{max}(x_{i+1})\cdot r_{2}$. The DA sees two random integers $r_1, r_2$ and the output of the protocol. The CSP receives no new message. Hence simulating the protocol is straightforward.

\section{Implementation}
\label{Sec_Implementation}
We have implemented our scheme using SCAPI (Secure Computation API)\cite{EjgenbergFLL.2012}. SCAPI is an open-source Java library for implementing secure two-party and multiparty computation protocols. It provides a reliable, efficient, and highly flexible cryptographic infrastructure. It also provides many optimizations of garbled circuits construction such as OT extensions, free-XOR, garbled row reduction \cite{EjgenbergFLL.2012}. Furthermore, there is a built-in communication layer that provides communication services for any interactive cryptographic protocol. This layer is comprised of two basic communication types: a two-party communication channel and a multiparty layer that arranges communication between multiple parties.

\subsection{Parameters} 
The first parameter that should be defined for the experiment is the security parameter (i.e. bit length of the public key) of Paillier's scheme (e.g. 2048 or 4096). Paillier's scheme requires to choose two large prime numbers $P$ and $Q$ of equal length and to compute a modulus $N = PQ$ and the private key $\lambda = lcm(P-1, Q-1)$. Then select a random $g \in \mathbb{Z}_{N^2}^{*}$ such that if $e$ is the smallest integer with $g^e = 1\bmod N^2$, then $N$ divides $e$. The public key is $(g, N)$. To encrypt a plaintext $m$ select a random $r \in \mathbb{Z}_{N}^{*}$ and compute Equation \ref{Paillier_Encrypt}. To decrypt a ciphertext $c$ compute Equation \ref{Paillier_Decrypt} with $L(u) = \frac{u-1}{N}$ and $\mu = (L(g^{\lambda}\bmod N^2))^{-1}\bmod N$.
\begin{eqnarray}
c &\leftarrow& g^m r^N\bmod N^2 \label{Paillier_Encrypt}\\
m &\leftarrow& L(c^{\lambda}\bmod N^2) \cdot \mu\bmod N \label{Paillier_Decrypt}
\end{eqnarray}

The other parameters of the OOPE protocol are the length of the inputs (e.g. 32, 64, 128, 256 bits integer), the length of the order $\log_2M$ - with $M$ the maximal order - (e.g. 32, 64, 128 bits), and the size of the database (e.g. $10^3$, $10^4$, $10^5$, $10^6$ entries).

\subsection{Optimization} 
To reduce the execution cost of our scheme we applied optimizations of Paillier's scheme as recommended in \cite{Paillier.1999}. 
We implemented our scheme with $g = 1+N$. This transforms the modular exponentiation $g^m\bmod N^2$ to a multiplication, since $(1+N)^m \bmod N^2 = 1+mN \bmod N^2$. 
Moreover, we precomputed $\mu$ in Equation \ref{Paillier_Decrypt}, used Chinese remaindering for decryption and pre-generated randomness for encryption and homomorphic plaintext randomization (Protocol \ref{Comparison_Protocol}). As a result, encryption, decryption and homomorphic addition take respectively 52$\mu s$, 12$ms$ and 67$\mu s$ when the key length is 2048 bits.

\subsection{Evaluation}
To evaluate the performance of our scheme we answer the following questions:
\begin{itemize}
	\item What time does the scheme take to encrypt an input of the DA?
	\item How does the network communication influence the protocol?
	\item What is the average generation time and the storage cost of the OPE-tree?
\end{itemize}

\textbf{Experimental Setup.} We chose 2048 bits as security parameter for Paillier's scheme and ran experiments via loopback address and via LAN using 3 machines with Intel(R) Xeon(R) CPU E7-4880 v2 at 2.50GHz. For the LAN experiment, the first machine with 4 CPUs and 8 GB RAM ran the CSP, the second machine with 4 CPUs and 4 GB RAM ran the DO and the last machine with 2 CPUs and 2 GB RAM ran the DA. For the loopback experiment we used the first machine. \\
We generated the OPE-tree with random inputs, balanced it and encrypted the plaintexts with Paillier encryption. For the DA, we generated 100 random inputs. Then we executed the OOPE protocol 100 times and computed the average time spent in the overall protocol, in the oblivious comparison, in Yao's protocol, in Paillier's decryption.

\begin{figure}[t]
\centering
\includegraphics[width=0.40\textwidth]{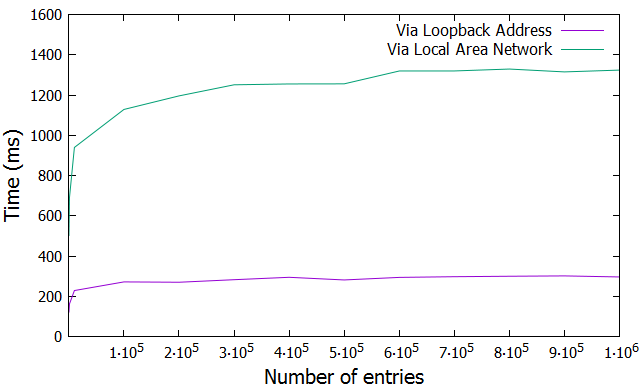}
\caption{Encryption Cost of OOPE} 
\label{Graph_Encryption_Cost}
\end{figure}

\begin{figure}
    \centering
    \begin{subfigure}[b]{0.40\textwidth}
        \includegraphics[width=\textwidth]{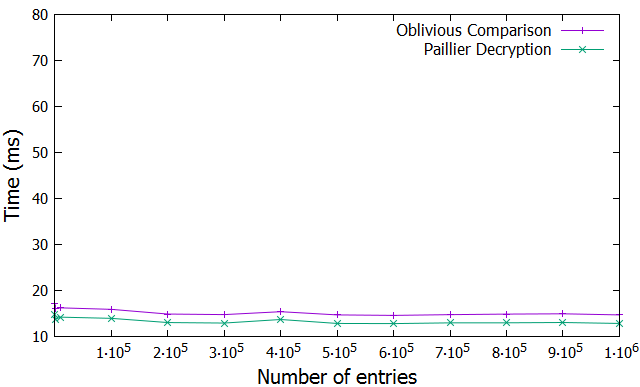}
        \caption{Via Loopback}
        \label{Graph_CMP_Cost_LOOP}
    \end{subfigure}
    ~
    \begin{subfigure}[b]{0.40\textwidth}
        \includegraphics[width=\textwidth]{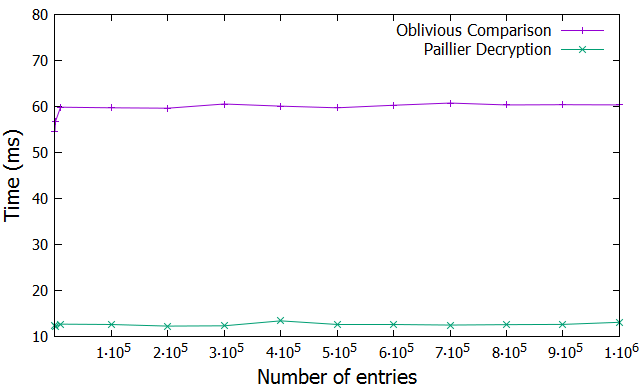}
        \caption{Via LAN}
        \label{Graph_CMP_Cost_LAN}
    \end{subfigure}
    \caption{Cost of oblivious comparison}\label{Graph_CMP_Cost}
\end{figure}

\textbf{Encryption Cost.} Figure \ref{Graph_Encryption_Cost} shows the average cost (y-axis) needed to encrypt a value with the OOPE protocol for database sizes (x-axis) between 100 and 1,000,000. Overall, the cost for OOPE goes up as the size of the database increases. This is because the depth of the tree increases with its size. Hence, this implies larger number of oblivious comparisons for larger trees. The average encryption time of OOPE for a database with one million entries is about 0.3 s via loopback (1.3 s via LAN). This cost corresponds to the cost of comparison multiply by the number of comparisons (e.g. 20 comparisons for 1000000 entries). \\
The inherent sub-protocol for oblivious comparison does not depend on the database size but on the input length and the security parameter $\log_2 N$. Figure \ref{Graph_CMP_Cost} shows that this cost is almost constant for each database size. Via loopback (Figure \ref{Graph_CMP_Cost_LOOP}) the comparison costs about 14 ms which is dominated by the time (about 12 ms to the DO) to decrypt $[\![x+r]\!]$ in step \ref{Compare_CSP_To_DO} of Protocol \ref{Comparison_Protocol}. The remaining 2 ms are due to the garbled circuit execution, since the overhead due to network communication is negligible. Figure \ref{Graph_CMP_Cost_LAN} shows how the network communication affects the protocol. Via LAN (Figure \ref{Graph_CMP_Cost_LAN}) the comparison costs about 60 ms where the computation is still dominated
by the 12 ms for decryption. However, the network traffic causes an overhead of about 46 ms.

\begin{figure}
    \centering
    \begin{subfigure}[b]{0.40\textwidth}
        \includegraphics[width=\textwidth]{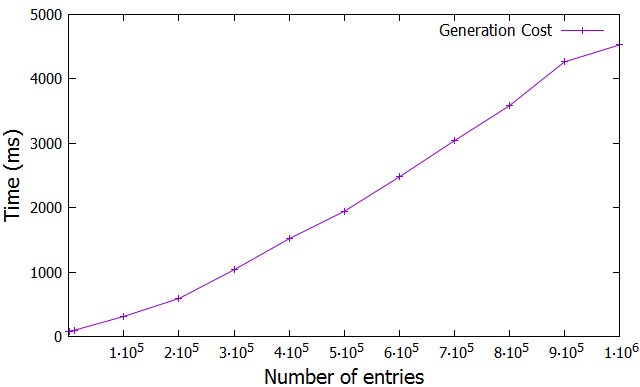}
        \caption{Average generation time}
        \label{OPE_Tree_Cost_Time}
    \end{subfigure}
		~
    \begin{subfigure}[b]{0.40\textwidth}
        \includegraphics[width=\textwidth]{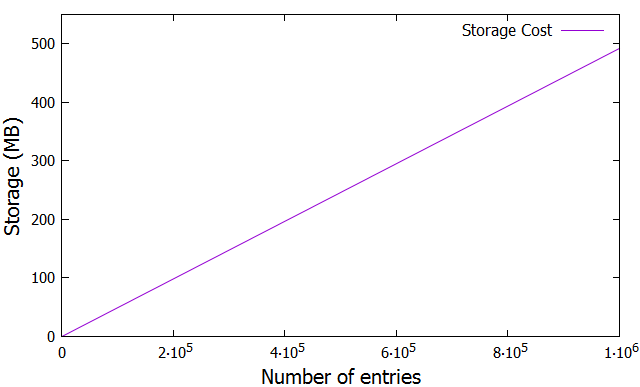}
        \caption{Storage cost for $\log_2N = 2048$}
        \label{OPE_Tree_Cost_Mem}
    \end{subfigure}
    \caption{OPE-tree costs}\label{OPE_Tree_Cost}
\end{figure}

\textbf{OPE-tree costs.} The time to generate the OPE-tree also increases with the number of entries in the database and it is dominated by the time needed to encrypt the input data with Paillier's scheme. However, the above optimizations (i.e. choice of $g = 1+N$ and pre-generated randomness) enable a fast generation of the OPE-tree. Figure \ref{OPE_Tree_Cost_Time} illustrates the generation time on the y-axis for databases with size between 100 and 1,000,000 on the x-axis. For 1 million entries, the generation costs on average only about 4.5 seconds. \\
The storage cost of the tree depends on $\log_2 N$, the bit length of the order and the database size. Since Paillier ciphertexts are twice longer than $\log_2 N$,  each OPE ciphertext $\langle [\![x]\!], y\rangle$ needs $2\cdot\log_2N+\log_2M$ bits storage. This is illustrated in Figure \ref{OPE_Tree_Cost_Mem}, with the x-axis representing the database size. The scheme needs 492.1 MB to store 1 million OPE ciphertexts, when the security parameter is 2048 and the order is 32-bit long.

\section{Conclusion}
\label{Sec_Conclusion}
Since order-preserving encryption (OPE) schemes are limited to the use case to one server and one client, we introduced a novel notion of oblivious order-preserving encryption (OOPE) as an equivalent of a public-key order-preserving encryption. Then we presented a protocol for OOPE that combines deterministic OPE schemes based on binary tree search with Paillier's homomorphic encryption scheme and garbled circuits. We also applied our technique to the case where the underlying OPE scheme is probabilistic. Finally, we implemented our scheme with SCAPI and an optimized Paillier's scheme and showed that it achieves acceptable performance for interactive use.


\bibliographystyle{IEEEtranS}
\bibliography{IEEEabrv,ObliviousOPE}

\begin{thebibliography}{10}
\providecommand{\url}[1]{#1}
\csname url@samestyle\endcsname
\providecommand{\newblock}{\relax}
\providecommand{\bibinfo}[2]{#2}
\providecommand{\BIBentrySTDinterwordspacing}{\spaceskip=0pt\relax}
\providecommand{\BIBentryALTinterwordstretchfactor}{4}
\providecommand{\BIBentryALTinterwordspacing}{\spaceskip=\fontdimen2\font plus
\BIBentryALTinterwordstretchfactor\fontdimen3\font minus
  \fontdimen4\font\relax}
\providecommand{\BIBforeignlanguage}[2]{{%
\expandafter\ifx\csname l@#1\endcsname\relax
\typeout{** WARNING: IEEEtranS.bst: No hyphenation pattern has been}%
\typeout{** loaded for the language `#1'. Using the pattern for}%
\typeout{** the default language instead.}%
\else
\language=\csname l@#1\endcsname
\fi
#2}}
\providecommand{\BIBdecl}{\relax}
\BIBdecl

\bibitem{AgrawalKSX.2004}
R.~Agrawal, J.~Kiernan, R.~Srikant, and Y.~Xu, ``Order preserving encryption
  for numeric data,'' in \emph{Proceedings of the 2004 ACM SIGMOD International
  Conference on Management of Data}, ser. SIGMOD '04.\hskip 1em plus 0.5em
  minus 0.4em\relax New York, NY, USA: ACM, 2004, pp. 563--574.

\bibitem{AtallahBLFT.2004}
M.~J. Atallah, M.~Bykova, J.~Li, K.~B. Frikken, and M.~Topkara, ``Private
  collaborative forecasting and benchmarking,'' in \emph{Proceedings of the
  2004 {ACM} Workshop on Privacy in the Electronic Society, {WPES} 2004,
  Washington, DC, USA, October 28, 2004}, 2004, pp. 103--114.

\bibitem{AtallahDES.2003}
M.~J. Atallah, H.~G. Elmongui, V.~Deshpande, and L.~B. Schwarz, ``Secure
  supply-chain protocols,'' in \emph{2003 {IEEE} International Conference on
  Electronic Commerce {(CEC} 2003), 24-27 June 2003, Newport Beach, CA, {USA}},
  2003, pp. 293--302.

\bibitem{BellareHR.2012}
\BIBentryALTinterwordspacing
M.~Bellare, V.~T. Hoang, and P.~Rogaway, ``Foundations of garbled circuits,''
  in \emph{Proceedings of the 2012 ACM Conference on Computer and
  Communications Security}, ser. CCS '12.\hskip 1em plus 0.5em minus
  0.4em\relax New York, NY, USA: ACM, 2012, pp. 784--796. [Online]. Available:
  \url{http://doi.acm.org/10.1145/2382196.2382279}
\BIBentrySTDinterwordspacing

\bibitem{BoldyrevaCLO.2009}
A.~Boldyreva, N.~Chenette, Y.~Lee, and A.~O'Neill, ``Order-preserving symmetric
  encryption,'' in \emph{Proceedings of the 28th Annual International
  Conference on Advances in Cryptology: The Theory and Applications of
  Cryptographic Techniques}, ser. EUROCRYPT '09.\hskip 1em plus 0.5em minus
  0.4em\relax Berlin, Heidelberg: Springer-Verlag, 2009, pp. 224--241.

\bibitem{BoldyrevaCO.2011}
A.~Boldyreva, N.~Chenette, and A.~O'Neill, ``Order-preserving encryption
  revisited: Improved security analysis and alternative solutions,'' in
  \emph{Proceedings of the 31st Annual Conference on Advances in Cryptology},
  ser. CRYPTO'11.\hskip 1em plus 0.5em minus 0.4em\relax Berlin, Heidelberg:
  Springer-Verlag, 2011, pp. 578--595.

\bibitem{CatrinaK.2008}
O.~Catrina and F.~Kerschbaum, ``Fostering the uptake of secure multiparty
  computation in e-commerce,'' in \emph{Proceedings of the The Third
  International Conference on Availability, Reliability and Security, {ARES}
  2008, March 4-7, 2008, Technical University of Catalonia, Barcelona , Spain},
  2008, pp. 693--700.

\bibitem{CramerDN.2015}
R.~Cramer, I.~Damg{\aa}rd, and J.~B. Nielsen, \emph{Secure Multiparty
  Computation and Secret Sharing}.\hskip 1em plus 0.5em minus 0.4em\relax New
  York, NY, USA: Cambridge University Press, 2015.

\bibitem{DamgardJurik.2001}
I.~Damg{\aa}rd and M.~Jurik, ``A generalisation, a simplification and some
  applications of paillier's probabilistic public-key system,'' in
  \emph{Proceedings of the 4th International Workshop on Practice and Theory in
  Public Key Cryptography: Public Key Cryptography}, ser. PKC '01.\hskip 1em
  plus 0.5em minus 0.4em\relax London, UK, UK: Springer-Verlag, 2001, pp.
  119--136.

\bibitem{DamgardT.2008}
I.~Damg{\aa}rd and R.~Thorbek, ``Efficient conversion of secret-shared values
  between different fields,'' \emph{{IACR} Cryptology ePrint Archive}, vol.
  2008, p. 221, 2008.

\bibitem{DuA.2001}
W.~Du and M.~J. Atallah, ``Privacy-preserving cooperative scientific
  computations,'' in \emph{Proceedings of the 14th IEEE Workshop on Computer
  Security Foundations}, ser. CSFW '01.\hskip 1em plus 0.5em minus 0.4em\relax
  Washington, DC, USA: IEEE Computer Society, 2001, pp. 273--.

\bibitem{DurDuB16}
B.~Durak, T.~DuBuisson, and D.~Cash, ``What else is revealed by order-revealing
  encryption?'' IACR Cryptology ePrint Archive, Tech. Rep. 786, 2016.

\bibitem{EjgenbergFLL.2012}
Y.~Ejgenberg, M.~Farbstein, M.~Levy, and Y.~Lindell, ``{SCAPI:} the secure
  computation application programming interface,'' \emph{{IACR} Cryptology
  ePrint Archive}, vol. 2012, p. 629, 2012.

\bibitem{Frikken.2010}
K.~B. Frikken, ``Secure multiparty computation,'' in \emph{Algorithms and
  Theory of Computation Handbook}.\hskip 1em plus 0.5em minus 0.4em\relax
  Chapman \& Hall/CRC, 2010, pp. 14--14.

\bibitem{Gentry.2009}
C.~Gentry, ``Fully homomorphic encryption using ideal lattices,'' in
  \emph{Proceedings of the Forty-first Annual ACM Symposium on Theory of
  Computing}, ser. STOC '09.\hskip 1em plus 0.5em minus 0.4em\relax New York,
  NY, USA: ACM, 2009, pp. 169--178.

\bibitem{Goldreich.2004}
O.~Goldreich, \emph{Foundations of Cryptography: Volume 2, Basic
  Applications}.\hskip 1em plus 0.5em minus 0.4em\relax New York, NY, USA:
  Cambridge University Press, 2004.

\bibitem{GruSek16}
P.~Grubbs, K.~Sekniqi, V.~Bindschaedler, M.~Naveed, and T.~Ristenpart,
  ``Leakage-abuse attacks against order-revealing encryption,'' IACR Cryptology
  ePrint Archive, Tech. Rep. 895, 2016.

\bibitem{Kerschbaum.2012}
F.~Kerschbaum, ``Privacy-preserving computation - (position paper),'' in
  \emph{Privacy Technologies and Policy - First Annual Privacy Forum, {APF}
  2012, Limassol, Cyprus, October 10-11, 2012, Revised Selected Papers}, 2012,
  pp. 41--54.

\bibitem{Kerschbaum.2015}
------, ``Frequency-hiding order-preserving encryption,'' in \emph{Proceedings
  of the 22Nd ACM SIGSAC Conference on Computer and Communications Security},
  ser. CCS '15.\hskip 1em plus 0.5em minus 0.4em\relax New York, NY, USA: ACM,
  2015, pp. 656--667.

\bibitem{KerschbaumS.2014}
F.~Kerschbaum and A.~Schr{\"{o}}pfer, ``Optimal average-complexity
  ideal-security order-preserving encryption,'' in \emph{Proceedings of the
  2014 {ACM} {SIGSAC} Conference on Computer and Communications Security,
  Scottsdale, AZ, USA, November 3-7, 2014}, 2014, pp. 275--286.

\bibitem{KolesnikovSS.2009}
V.~Kolesnikov, A.~Sadeghi, and T.~Schneider, ``Improved garbled circuit
  building blocks and applications to auctions and computing minima,'' in
  \emph{Cryptology and Network Security, 8th International Conference, {CANS}
  2009, Kanazawa, Japan, December 12-14, 2009. Proceedings}, 2009, pp. 1--20.

\bibitem{KolesnikovS.2008}
V.~Kolesnikov and T.~Schneider, ``Improved garbled circuit: Free {XOR} gates
  and applications,'' in \emph{Automata, Languages and Programming, 35th
  International Colloquium, {ICALP} 2008, Reykjavik, Iceland, July 7-11, 2008,
  Proceedings, Part {II} - Track {B:} Logic, Semantics, and Theory of
  Programming {\&} Track {C:} Security and Cryptography Foundations}, 2008, pp.
  486--498.

\bibitem{Lindell.2016}
Y.~Lindell, ``Fast cut-and-choose-based protocols for malicious and covert
  adversaries,'' \emph{J. Cryptology}, vol.~29, no.~2, pp. 456--490, 2016.

\bibitem{LindellP.2002}
Y.~Lindell and B.~Pinkas, ``Privacy preserving data mining,'' \emph{Journal of
  Cryptology}, vol.~15, no.~3, pp. 177--206, 2002.

\bibitem{LindellP.2007}
------, ``An efficient protocol for secure two-party computation in the
  presence of malicious adversaries,'' in \emph{Proceedings of the 26th Annual
  International Conference on Advances in Cryptology}, ser. EUROCRYPT
  '07.\hskip 1em plus 0.5em minus 0.4em\relax Berlin, Heidelberg:
  Springer-Verlag, 2007, pp. 52--78.

\bibitem{LindellP.2009}
------, ``A proof of security of yao's protocol for two-party computation,''
  \emph{J. Cryptol.}, vol.~22, no.~2, pp. 161--188, Apr. 2009.

\bibitem{LindellP.XX.2009}
------, ``Secure multiparty computation for privacy-preserving data mining,''
  \emph{The Journal of Privacy and Confidentiality}, vol. 2009, no.~1, pp.
  59--98, 2009.

\bibitem{LindellR.2014}
Y.~Lindell and B.~Riva, ``Cut-and-choose yao-based secure computation in the
  online/offline and batch settings,'' in \emph{Advances in Cryptology -
  {CRYPTO} 2014 - 34th Annual Cryptology Conference, Santa Barbara, CA, USA,
  August 17-21, 2014, Proceedings, Part {II}}, 2014, pp. 476--494.

\bibitem{MavroforakisCOK.2015}
C.~Mavroforakis, N.~Chenette, A.~O'Neill, G.~Kollios, and R.~Canetti, ``Modular
  order-preserving encryption, revisited,'' in \emph{Proceedings of the 2015
  {ACM} {SIGMOD} International Conference on Management of Data, Melbourne,
  Victoria, Australia, May 31 - June 4, 2015}, 2015, pp. 763--777.

\bibitem{NaveedKW.2015}
M.~Naveed, S.~Kamara, and C.~V. Wright, ``Inference attacks on
  property-preserving encrypted databases,'' in \emph{Proceedings of the 22Nd
  ACM SIGSAC Conference on Computer and Communications Security}, ser. CCS
  '15.\hskip 1em plus 0.5em minus 0.4em\relax New York, NY, USA: ACM, 2015, pp.
  644--655.

\bibitem{Paillier.1999}
P.~Paillier, ``Public-key cryptosystems based on composite degree residuosity
  classes,'' in \emph{Proceedings of the 17th International Conference on
  Theory and Application of Cryptographic Techniques}, ser. EUROCRYPT'99.\hskip
  1em plus 0.5em minus 0.4em\relax Berlin, Heidelberg: Springer-Verlag, 1999,
  pp. 223--238.

\bibitem{Pedersen.1991}
T.~P. Pedersen, ``Non-interactive and information-theoretic secure verifiable
  secret sharing,'' in \emph{Proceedings of the 11th Annual International
  Cryptology Conference on Advances in Cryptology}, ser. CRYPTO '91.\hskip 1em
  plus 0.5em minus 0.4em\relax London, UK, UK: Springer-Verlag, 1992, pp.
  129--140.

\bibitem{PinkaSSW.2009}
B.~Pinkas, T.~Schneider, N.~P. Smart, and S.~C. Williams, ``Secure two-party
  computation is practical,'' \emph{{IACR} Cryptology ePrint Archive}, vol.
  2009, p. 314, 2009.

\bibitem{PopaLZ.2013}
R.~A. Popa, F.~H. Li, and N.~Zeldovich, ``An ideal-security protocol for
  order-preserving encoding,'' in \emph{Proceedings of the 2013 IEEE Symposium
  on Security and Privacy}, ser. SP '13.\hskip 1em plus 0.5em minus 0.4em\relax
  Washington, DC, USA: IEEE Computer Society, 2013, pp. 463--477.

\bibitem{PopaRZB.2011}
R.~A. Popa, C.~M.~S. Redfield, N.~Zeldovich, and H.~Balakrishnan, ``Cryptdb:
  Protecting confidentiality with encrypted query processing,'' in
  \emph{Proceedings of the Twenty-Third ACM Symposium on Operating Systems
  Principles}, ser. SOSP '11.\hskip 1em plus 0.5em minus 0.4em\relax New York,
  NY, USA: ACM, 2011, pp. 85--100.

\bibitem{TaigelTP.2018}
\BIBentryALTinterwordspacing
F.~Taigel, A.~K. Tueno, and R.~Pibernik, ``Privacy-preserving condition-based
  forecasting using machine learning,'' \emph{Journal of Business Economics},
  Jan 2018. [Online]. Available:
  \url{https://doi.org/10.1007/s11573-017-0889-x}
\BIBentrySTDinterwordspacing

\bibitem{TerYun14}
I.~Teranishi, M.~Yung, and T.~Malkin, ``Order-preserving encryption secure
  beyond one-wayness,'' in \emph{Proceedings of the 20th International
  Conference on Advances in Cryptology}, ser. ASIACRYPT, 2014.

\bibitem{Yao.1982}
A.~C. Yao, ``Protocols for secure computations,'' in \emph{Proceedings of the
  23rd Annual Symposium on Foundations of Computer Science}, ser. SFCS
  '82.\hskip 1em plus 0.5em minus 0.4em\relax Washington, DC, USA: IEEE
  Computer Society, 1982, pp. 160--164.

\end{thebibliography}


\appendix


\subsection{Proof of Theorem \ref{Theorem_OOPE_Security}}
\label{Theorem_OOPE_Security_App}

\textbf{Theorem \ref{Theorem_OOPE_Security}} (Security)\textbf{.}
The protocol $\Pi$\textsubscript{\tiny \normalfont OOPE} securely implements the OOPE functionality in the semi-honest model with honest majority.
\begin{proof} Since the protocol makes a call to the comparison functionality involving the DO and the DA, the proof will use the simulators of Theorem 7 of  \cite{LindellP.2009} to generate the view of the Data Owner and the Data Analyst. Let $S_{1}^{CO}$ and $S_{2}^{CO}$ be respectively the simulators of the DO and the DA in the comparison protocol. We follow the idea of \cite{LindellP.2009} by proving the cases separately, when the DO is corrupted, when the DA is corrupted and when the CSP is corrupted. Let $CO$ denote the comparison functionality of the garbled circuit protocol, in which the DO and the DA are respectively generator and evaluator, then the view of DA and DO in the comparison protocol are denoted by $view_{DA}^{\Pi_{CO}}$ and $view_{DO}^{\Pi_{CO}}$ respectively.\\

\noindent\textbf{Case 1 - DO is corrupted}\\
The view of the Data Owner consists of a sequence of randomized inputs and its view in the comparison steps needed to compute the output. Let $l$ be the number of comparisons
required to encrypt $\overline{x}$, then $view_{DO}^{\Pi\textsubscript{\tiny \normalfont OOPE}}(\mathbb{S}, \overline{x}, sk)$ contains:
\begin{align}
&[\![x^{(i)}+r^{(i)}]\!], x^{(i)}+r^{(i)}, b_o^{(i)}, b_o^{'(i)}, \nonumber \\
&view_{DO}^{\Pi_{CO}}(\langle b_o^{(i)}, b_o^{'(i)}, x^{(i)}+r^{(i)}\rangle,  \langle b_a^{(i)}, b_a^{'(i)}, \overline{x}^{(i)}+r^{(i)}\rangle), \nonumber \\
&\langle b_e^{(i)} \oplus b_a^{(i)}\oplus b_o^{(i)}, b_g^{(i)} \oplus b_a^{'(i)}\oplus b_o^{'(i)} \rangle, \langle b_e^{(i)} \oplus b_a^{(i)}, b_g^{(i)} \oplus b_a^{'(i)} \rangle
\label{Eqn_DO_VIEW}
\end{align}
for $i = 1 \cdots l$.
Notice that, in contradiction to a normal comparison protocol, neither the real input nor the real output are revealed to the DO. They are completely random to the DO. The input is randomized by the CSP and the output is randomly blinded by the DA. 
Upon input $(sk, \emptyset)$ $S_{DO}$ generates for each $i = 1 \cdots l$ the following:  
\begin{align}
&[\![x^{'(i)} ]\!], x^{'(i)}, b^{(i)}, b^{'(i)}, \nonumber \\
&S_{1}^{CO}(\langle b^{(i)}, b^{'(i)}, x^{'(i)} \rangle, \langle b_1^{(i)}\oplus b^{(i)}, b_2^{(i)}\oplus b^{'(i)} \rangle), \nonumber \\ 
&\langle b_1^{(i)}\oplus b^{(i)}, b_2^{(i)}\oplus b^{'(i)} \rangle, \langle b_1^{(i)}, b_2^{(i)} \rangle
\label{Eqn_SDO_Output}
\end{align}
where $x^{'(i)}$ is a random integer and $b^{(i)}, b^{'(i)}, b_1^{(i)}, b_2^{(i)}$ are random bits.
Clearly the outputs of Equation \ref{Eqn_DO_VIEW} and Equation \ref{Eqn_SDO_Output} are indistinguishable from each other (Equation \ref{eq:OOPE_Security_Eqn_DO}). This is because $x^{(i)}+r^{(i)}, b_o^{(i)}, b_o^{'(i)}$ are just as random as $x^{'(i)}, b^{(i)}, b^{'(i)}$ respectively. Furthermore, since $b_a^{(i)}$ and $b_a^{'(i)}$ are randomly chosen by the DA, $b_e^{(i)} \oplus b_a^{(i)}, b_g^{(i)} \oplus b_a^{'(i)}$ are also just as random as $b_1^{(i)}, b_2^{(i)}$ respectively. The security of Yao's protocol (Theorem 7 of \cite{LindellP.2009}) finishes the proof. \\

\noindent\textbf{Case 2 - DA is corrupted}\\
This case is similar to the DO's case with the only difference that the DA knows $\overline{x}$ which is the same in each protocol round. The view $view_{DA}^{\Pi\textsubscript{\tiny \normalfont OOPE}}(\mathbb{S}, \overline{x}, sk)$ contains:
\begin{align}
&r^{(i)}, \overline{x} + r^{(i)}, b_a^{(i)}, b_a^{'(i)}, \nonumber \\
&view_{DA}^{\Pi_{CO}}(\langle b_o^{(i)}, b_o^{'(i)}, x^{(i)}+r^{(i)}\rangle, \langle b_a^{(i)}, b_a^{'(i)}, \overline{x}+r^{(i)} \rangle), \nonumber \\
&\langle b_e^{(i)}\oplus b_a^{(i)} \oplus b_o^{(i)}, b_g^{(i)}\oplus b_a^{'(i)} \oplus b_o^{'(i)} \rangle, \langle b_e^{(i)} \oplus b_o^{(i)}, b_g^{(i)} \oplus b_o^{'(i)} \rangle
\label{Eqn_DA_VIEW}
\end{align}
for $i = 1 \cdots l$. 
Notice that also the DA is unaware of the result of the comparison, because the output is randomized by a bit of the DO. The simulator for the DA works in the same way as $S_{DO}$. On input $(\overline{x}, \overline{y})$ $S_{DA}$ generates for each $i = 1 \cdots l$ the following:  
\begin{align}
&r^{'(i)}, \overline{x} + r^{'(i)}, b^{(i)}, b^{'(i)}, \nonumber \\
&S_{2}^{CO}(\langle b^{(i)}, b^{'(i)}, \overline{x}+r^{(i)} \rangle, \langle b_1^{(i)}\oplus b^{(i)}, b_2^{(i)}\oplus b^{'(i)} \rangle), \nonumber \\
&\langle b_1^{(i)}\oplus b^{(i)}, b_2^{(i)}\oplus b^{'(i)} \rangle, \langle b_1^{(i)}, b_2^{(i)} \rangle
\label{Eqn_SDA_Output}
\end{align}
where $b^{(i)}, b^{'(i)}, b_1^{(i)}, b_2^{(i)}$ are random bits. \\

\noindent\textbf{Case 3 - CSP is corrupted}\\
The view $view_{CSP}^{\Pi\textsubscript{\tiny \normalfont OOPE}}(\mathbb{S}, \overline{x}, sk)$ of the CSP consists of random integers and outputs of the garbled circuit, that it receives from the DA and the DO:
\begin{align}
&[\![x^{(i)}]\!], r^{(i)}, [\![x^{(i)}+r^{(i)}]\!], \langle b_a^{(i)}, b_a^{'(i)}, b_e^{(i)}\oplus b_o^{(i)}, b_g^{(i)}\oplus b_o^{'(i)} \rangle, \nonumber \\
&\langle b_o^{(i)}, b_o^{'(i)}, b_e^{(i)}\oplus b_a^{(i)}, b_g^{(i)}\oplus b_a^{'(i)} \rangle.
\label{Eqn_CSP_View}
\end{align}
$S_{CSP}$ is given the server state $\mathbb{S}$ and a valid ciphertext $\left\langle [\![\overline{x}]\!],\overline{y} \right\rangle$. Then it chooses two elements $\left\langle [\![x_i]\!],y_j \right\rangle$,  $\left\langle [\![x_{j+1}]\!],y_{j+1} \right\rangle$ from the OPE-table, such that $y_{j} \leq \overline{y} < y_{j+1}$. The next step is to insert $[\![\overline{x}]\!]$ in the tree and simulate the path from the root to $[\![\overline{x}]\!]$. There are three possible cases:
\begin{itemize}
	\item if $y_1 = \overline{y}$ then $[\![\overline{x}]\!]$ is the same as $[\![x_j]\!]$
	\item else if $depth([\![y_j]\!]) > depth([\![y_{j+1}]\!])$ then insert $[\![\overline{x}]\!]$ right to $[\![x_j]\!]$
	\item else $depth([\![x_{j+1}]\!]) > depth([\![x_j]\!])$, insert $[\![\overline{x}]\!]$ left to $[\![x_{j+1}]\!]$
\end{itemize}
where $depth([\![x]\!])$ represents the depth of the node $[\![x]\!]$, i.e. the number of edges from the root node of the tree to $[\![x]\!]$.  
For all ancestors of $[\![\overline{x}]\!]$, $b_g^{(i)}$ is $0$ (resp. $1$) if the path $P$ goes to the left (resp. to the right). 
The value of $b_e^{(i)}$ is $1$ for all ancestors of $[\![\overline{x}]\!]$. For the node $[\![\overline{x}]\!]$ itself, there are two possible cases:
\begin{itemize}
	\item $[\![\overline{x}]\!]$ is not a leaf: this occurs if one is trying to insert a value, that was already in the tree. It holds $b_g^{(i)} = b_e^{(i)} = 0$ because $y_i$ is equal to $\overline{y}$.
	\item $[\![\overline{x}]\!]$ is a leaf: this occurs either because $[\![\overline{x}]\!]$ is inserted at a leaf node or $y_i = \overline{y}$ holds as in the first case. For the former case $b_g^{(i)}$ and $b_e^{(i)}$ are undefined because 
	no comparison was done. For the latter one $b_e^{(i)}$ is $0$ which also implies $b_g^{(i)}=0$. Hence the simulator chooses $b_g^{(i)}=b_e^{(i)}$ randomly between $0$ and undefined. 
\end{itemize}
To simulate the CSP's view, $S_{CSP}$ chooses a random integer $r^{'(i)}$ and random bits $b_{\alpha}^{(i)}, b_{\alpha}^{'(i)}$ and $b_{\omega}^{(i)}, b_{\omega}^{'(i)}$ and outputs 
\begin{align}
&[\![x^{(i)}]\!], r^{'(i)}, [\![x^{(i)}+r^{'(i)}]\!], \langle b_{\alpha}^{(i)}, b_{\alpha}^{'(i)}, b_e^{(i)}\oplus b_{\omega}^{(i)}, b_g^{(i)} \oplus b_{\omega}^{'(i)} \rangle, \nonumber \\
&\langle b_{\omega}^{(i)}, b_{\omega}^{'(i)}, b_e^{(i)}\oplus b_{\alpha}^{(i)}, b_g^{(i)}\oplus b_{\alpha}^{'(i)} \rangle.
\label{Eqn_SCSP_Output}
\end{align}
Since $[\![x^{(i)}]\!], b_e^{(i)}, b_g^{(i)}$ depend on the path they are the same in Equation \ref{Eqn_CSP_View} and Equantion \ref{Eqn_SCSP_Output} and $r^{(i)}, b_a^{(i)}, b_a^{'(i)}, b_o^{(i)}, b_o^{'(i)}$ are indistinguishable from $r^{'(i)}, b_{\alpha}^{(i)}, b_{\alpha}^{'(i)}, b_{\omega}^{(i)}, b_{\omega}^{'(i)}$.
\end{proof}

\end{document}